\definecolor{navy}{RGB}{0,0,128}
\newcommand{\relu}{\text{ReLU}\xspace{}}
\tikzstyle{every pin edge}=[<-,shorten <=1pt]
\tikzstyle{neuron}=[circle,fill=black!25,minimum size=17pt,inner sep=0pt]
\tikzstyle{input neuron}=[neuron, fill=green!40]
\tikzstyle{output neuron}=[neuron, fill=red!40]
\tikzstyle{hidden neuron}=[neuron, fill=blue!40]
\tikzstyle{constructed neuron new}=[neuron, color=red, fill=orange!40]
\tikzstyle{constructed neuron}=[neuron, fill=orange!40]
\tikzstyle{annot} = [text width=6em, text centered]
\tikzstyle{nnedge} = [-{stealth},shorten >=0.1cm, shorten <=0.05cm,line width=0.8pt,black]
\tikzstyle{proofNode} = [rounded rectangle, fill=purple!30]
\tikzstyle{proofEdge} = [-{stealth},shorten >=0.1cm, shorten <=0.05cm,line width=0.8pt,black]
\newcommand{\mysubsection}[1]{\medskip\noindent\textbf{#1}}
\newcommand{\sat}{\texttt{SAT}}
\newcommand{\unsat}{\texttt{UNSAT}}
\newcommand{\nn}{\mathcal{N}}
\newcommand{\rn}[1]{\mathbb{R}^{#1}}
\newcommand{\tone}{Th(\mathbb{R},+,-,\cdot,0,1,<)}
\newcommand{\ttwo}{Th(\mathbb{R},+,-,\cdot,0,1,<,\exp)}
\newcommand{\tthree}{Th(\mathbb{R},+,-,{\cdot}_{q\in\mathbb{Q}},0,1,<,\sigma,\tanh,\tau)}
\newcommand{\tfour}{Th(\mathbb{R},+,-,\cdot,0,1,<,e)}
\newcommand{\definabletwo}[2]{ (\exp(#1) + 1) \cdot #2 = \exp(#1)}
\newcommand{\definablethree}[2]{ (\exp(#1) + \exp(-#1)) \cdot #2 = \exp(#1) - \exp(-#1)}
\title{DNN Verification, Reachability, and the Exponential Function Problem}
\author{Omri Isac}{The Hebrew University of Jerusalem}{}{}{}
\author{Yoni Zohar}{Bar-Ilan University}{}{}{}
\author{Clark Barrett}{Stanford University}{}{}{}
\author{Guy Katz}{The Hebrew University of Jerusalem}{}{}{}
\authorrunning{O. Isac, Y. Zohar, C. Barrett and G. Katz}
\keywords{Formal Verification, Computability Theory, Deep Neural Networks}
\begin{document}

\maketitle

\begin{abstract}
  Deep neural networks (DNNs) are increasingly being deployed to
  perform safety-critical tasks. The opacity of DNNs,
  which prevents humans from reasoning about them, presents new
  safety and security challenges. To address these challenges, the
  verification community has begun developing techniques for
  rigorously analyzing DNNs, with numerous verification algorithms
   proposed in recent years.  While a significant amount of work
  has gone into developing these verification algorithms, little work
  has been devoted to rigorously studying the computability and complexity of the
  underlying theoretical problems.  Here, we seek to contribute to the
  bridging of this gap.  We focus on two kinds of DNNs: those that
  employ piecewise-linear activation functions (e.g., \relu{}), and those that
  employ piecewise-smooth activation functions (e.g., Sigmoids).  We prove the
  two following theorems:
\begin{enumerate}
\item The decidability of verifying DNNs with a particular set of piecewise-smooth activation functions, including Sigmoid and $\tanh$,
   is equivalent to a well-known, open problem formulated by
  Tarski; and
\item The DNN verification problem for any quantifier-free linear
  arithmetic specification can be reduced to the DNN reachability
  problem, whose approximation is NP-complete.
\end{enumerate}
These results answer two fundamental questions about the
computability and complexity of DNN verification, and the ways it is
affected by the network's activation functions and error tolerance;
and could help guide future efforts in developing DNN verification
tools.
\end{abstract}

\section{Introduction}
\label{sec:Introduction}

The use of artificial intelligence, and specifically that of deep
neural networks (DNNs), is becoming extremely widespread --- as DNNs
are often able to solve complex tasks more successfully than any other
computational approach. These include critical tasks in
healthcare~\cite{EsRoRaKuDeChCuCoThDe19}, autonomous
driving~\cite{BoDeDwFiFlGoJaMoMuZhZhZhZi16},
communication networks~\cite{ChChSaYiDe19}, and also the task of communicating
with humans through text~\cite{chatGPT22} --- which seems to bring
DNNs closer and closer to passing the famous Turing test~\cite{Tu50}.

However, it has been shown that even state-of-the-art DNNs are
susceptible to various errors. In one infamous example, known as
\emph{adversarial perturbations}, small input perturbations that are
imperceivable to the human eye are crafted to fool modern DNNs,
causing them to output incorrect results selected by an
attacker~\cite{GoShSz14}. Adversarial perturbations thus constitute a
safety and security threat, to which most DNNs are
susceptible~\cite{SzZaSuBrErGoFe13}. Other issues, such as privacy concerns and bias
against various groups, have also been observed, making it clear that
a high bar of trustworthiness must be met before stakeholders can
fully accept DNNs~\cite{LiXiWaZoXiYiVa20}.

Overcoming these weaknesses of DNNs is a significant challenge, due to
their size and complexity. This is further aggravated by the fact that
DNNs are machine-generated (automatically \emph{trained} over many
examples). Consequently, they are  opaque to human engineers, and often fail to generalize their results to examples sufficiently different from the set of examples used for training~\cite{LuScHe21}.  This has
sparked much interest in the verification community, which began
studying verification techniques for DNNs, in order to guarantee their
compliance with given specifications. In recent years, the
verification community has designed and implemented multiple
verification algorithms for DNNs, relying on techniques such as SMT
solving~\cite{HuKwWaWu17,KaBaDiJuKo21,WuZeKaBa22}, abstract
interpretation~\cite{GeMiDrTsChVe18, GuPuTa21}, convex
relaxation~\cite{LyKoKoWoLiDa20}, adversarial search~\cite{HeLo20},
and many others~\cite{ZhShGuGuLeNa20, AvBlChHeKoPr19, BaShShMeSa19,
  FrChMaOsSe20, AkKeLoPi19, PuTa10, SaDuMo19, GaGePuVe19,
  TrBaXiJo20,WaPeWhYaJa18,OsBaKa22, ElKaKaSc21, StWuZeJuKaBaKo21, AmScKa21}.  Indeed, DNN verification technology has
been making great strides recently~\cite{MuBrBaLiJo22}.

Modern verification algorithms depend heavily on the structure of the
DNN being verified, and specifically on the type of its
\emph{activation functions}. Initial efforts at DNN verification
focused almost exclusively on DNNs with piecewise-linear (PWL)
activation functions. It has been shown that the
verification of such networks is an NP-complete problem~\cite{KaBaDiJuKo21,SaLa21}, and multiple
algorithms have been proposed for solving it~\cite{LuMa17,KaBaDiJuKo21,GeMiDrTsChVe18}.  Although more recent
approaches can handle DNNs with smooth activation functions, these
algorithms are often approximate and/or incomplete~\cite{MuMaSiPuVe22,HeLo20,GaGePuVe19}; in fact,
to the best of our knowledge, there is not a single algorithm
that is guaranteed to terminate with a correct answer when verifying
such DNNs. This raises two important
questions: 
\begin{enumerate}
	\item Does there exists a non-approximating algorithm that can
\emph{always} solve verification queries involving DNNs with non-PWL
activation functions?  Or, in other words, is the verification problem
of DNNs with smooth and piecewise-smooth activation functions \emph{decidable}? and \item When introducing approximations, how difficult does the verification problem become with respect to the DNN, the specification, and the size of the approximation? In other words, what is the computational complexity of DNN verification with smooth and piecewise-smooth activation functions and with $\epsilon$-error tolerance? 
\end{enumerate}

In this paper, we provide a partial answer for the first question, by
showing that the verification problem of DNNs with smooth and
piecewise-smooth activation functions is equivalent to a well-known,
open problem from the field of model theory --- Tarski's exponential
function problem~\cite{Ta52}. We do so by introducing a constructive
bijection between verification queries of such DNNs, and instances of
Tarski's open problem.

In addition, we provide a partial answer to the second question, by studying
the relations between DNN verification and DNN reachability problems,
and ultimately proving that they are equivalent. Even though this equivalence result was previously used~\cite{ElGoKa20}, as far as we know, we are the first to provide a formal reduction. This enables further
investigation of DNN verification with any
quantifier-free linear arithmetic specification formula as a specific case of DNN
reachability, without loss of generality. The latter problem is known
to be NP-complete when $\epsilon$-error tolerance is introduced in the result~\cite{RuHuZiKw18}.

Formally, we prove the two following theorems:
\begin{enumerate}
	\item The DNN verification problem for DNNs with smooth and piecewise-smooth activation functions is equivalent
          to Tarski's exponential function problem~\cite{Ta52}, which
          is a well-known open problem. 
	\item The DNN verification problem, with any
          quantifier-free linear arithmetic specification formula, can
          be reduced to the DNN reachability problem, which is
          NP-complete when some $\epsilon$-error tolerance is allowed~\cite{RuHuZiKw18}.
\end{enumerate}
Our results imply a fundamental difference between the hardness of
verification of DNNs with piecewise-smooth and with piecewise-linear
activation functions. As far as we know, we are the first to provide
any proof of this difference.

The rest of the paper is organized as follows. In Section~\ref{sec:Background} we provide background on DNNs,
verification and other necessary mathematical concepts. In Section~\ref{sec:dec} and Section~\ref{sec:verisreach} we formally prove the two main results
mentioned above. In Section~\ref{sec:Related} we discuss related work,
and in Section~\ref{sec:Conclusion} we describe our conclusions and
directions for future work.

\section{Background}
\label{sec:Background}
\subsection{Deep Neural Networks}
\textit{Deep neural networks}
(DNNs)~\cite{GoBeCu16} are directed graphs whose nodes (neurons) are
organized into layers, and whose nodes and edges are labeled with rational numbers. Nodes in the first layer, called the
\textit{input layer}, are assigned values matching the input to the DNN; and then the
values of nodes in each of the subsequent layers are computed
as functions of the values assigned to neurons in the preceding layer.  More
specifically, each node value is computed by first applying an affine 
transformation (linear transformation and addition of a constant) to the values from the preceding layer, and then
applying a non-linear \textit{activation function}~\cite{DuSiCh22} to the result. The
final (output) layer, which corresponds to the output of the network,
is computed without applying an activation function.

Three of the most common activation functions are the
\textit{rectified linear unit} (\relu), which is defined as:
\begin{center}
	$  \relu(x) = \begin{cases}
		x & x>0 \\
		0 & \text{otherwise;} \end{cases} $
\end{center}
the \emph{Sigmoid} function, defined as: 
\begin{center}
	$\sigma(x):\mathbb{R}\rightarrow(0,1) = \frac {\exp(x)}{\exp(x)+1}$
\end{center}
where $\exp$ is the exponential function; and the \emph{hyperbolic tangent}, defined as:
\begin{center}
	$\tanh(x):\mathbb{R}\rightarrow(-1,1) = \frac{\exp(x)-\exp(-x)}{\exp(x)+\exp(-x)}$
\end{center}
The latter two activation functions are both injective, and their inverses are defined as follows:
\begin{center}
	$\forall x\in(0,1): \sigma^{-1}(x) = \ln(\frac{x}{1-x})$ \\
	$\forall x\in(-1,1): \tanh^{-1}(x) = \frac{1}{2} \ln(\frac{1+x}{1-x})$
\end{center}
where $\ln$ is the natural logarithm function.
In addition, we consider the $\text{NLReLU}$ activation
function~\cite{LiZhGaQuJi19,DuSiCh22}, denoted $\tau$ for short, which
is defined as:
\begin{center}
	$\text{NLReLU}(x) = \tau(x) = \ln(\relu(x)+1)$
\end{center}

A simple DNN with four layers appears in Figure~\ref{fig:toyDnn}, where all biases are set to zero and are ignored. For input $\langle 1,2,1\rangle$, the first node in the second layer
evaluates to
$ \relu(1 \cdot 1 \; + \; 2 \cdot (-1) ) = \relu(-1) = 0 $; 
the second node in the second layer evaluates to
$ \relu(2 \cdot 1 \; + \; 1 \cdot (-1) ) = \relu(1) = 1 $;  and the node
in the third layer evaluates to $ \sigma(0-1) = \sigma(-1) $. Thus, the
node in the fourth (output) layer evaluates to $4\cdot\sigma(-1)$.

\begin{figure}[h]
	\begin{center}
		\def\layersep{1.2cm}
		\begin{tikzpicture}[shorten >=1pt,->,draw=black!50, node distance=\layersep,font=\footnotesize]
			
			\node[input neuron] (I-1) at (0,-1) {$x_1$};
			\node[input neuron] (I-2) at (0,-3) {$x_2$};
			\node[input neuron] (I-3) at (0,-5) {$x_3$};
			
			\node[hidden neuron] (H-1) at (\layersep, -2) {$v_1$};
			\node[hidden neuron] (H-2) at (\layersep,-4) {$v_2$};
			
			\node[hidden neuron] (H-3) at (2*\layersep,-3) {$v_3$};
			
			\node[output neuron] (O-1) at (3*\layersep,-3) {$y$};
			
			\draw[nnedge] (I-1) --node[above,pos=0.4] {$1$} (H-1);
			\draw[nnedge] (I-2) --node[below,pos=0.4] {$-1$} (H-1);
			
			\draw[nnedge] (I-2) --node[above,pos=0.4] {$1$} (H-2);
			\draw[nnedge] (I-3) --node[above,pos=0.4] {$-1$} (H-2);
			
			\draw[nnedge] (H-1) --node[below] {$1$} (H-3);
			\draw[nnedge] (H-2) --node[below] {$-1$} (H-3);
			
			\draw[nnedge] (H-3) --node[below] {$4$} (O-1);
			
			\node[above=0.05cm of H-1] (b1) {$\relu{}$};
			\node[above=0.05cm of H-2] (b1) {$\relu{}$};
			\node[above=0.05cm of H-3] (b1) {$\sigma{}$};
			
		\end{tikzpicture}
	\end{center}
	\caption{A toy DNN.}
	\label{fig:toyDnn}
\end{figure}

Formally, a DNN $ \mathcal{N}:\rn{m}\rightarrow\rn{k} $ is a sequence
of $ n $ layers $ L_0,...,L_{n-1} $ where each layer $ L_i $ consists of
$ s_i \in \mathbb{N} $ nodes, denoted $ v^1_i,...,v^{s_i}_i $, and biases $p^j_i\in\mathbb{Q}$ for each $v^j_i$. Each directed edge in the DNN is of the form $(v^l_{i-1},v^j_i)$ and is labeled with $w_{i,j,l}\in\mathbb{Q}$.
The assignment to the nodes in the input layer is defined by $v^j_0 = x_j$, where $\overline{x}\in\rn{m}$ is the input vector, and the assignment for the
$ j^{th} $ node in the $ 1 \leq i < n-1 $ layer is computed as
\begin{center}
	$v^j_i = f^j_i \left( \underset{l=1} { \overset {s_{i-1}} { \sum } }
	w_{i,j,l} \cdot v^l_{i-1} + p^j_i \right)$
\end{center}
for some activation function
$f^j_i:\mathbb{R}\rightarrow\mathbb{R}$.
Finally, neurons in the output layer are computed as: 
\begin{center}
	$v^j_{n-1} = \underset{l=1} { \overset {s_{n-2}} { \sum } }
	w_{n-1,j,l} \cdot v^l_{n-2} + p^j_{n-1}$ 
\end{center}
where
$ w_{i,j,l}$ and $ p^j_i $ are (respectively) the predetermined weights and biases of $ \mathcal{N}$. The size of a network $|\mathcal{N}|$ is defined as the overall number of its neurons.

\subsection{Formal Analysis of DNNs}
\label{subsec:DNNanalaysisBackground}
The formal methods community has tackled the formal analysis of DNNs
primarily along two axes: DNN \emph{verification}, and DNN
\emph{reachability}. These are two related formulations, as
reachability problems may be expressed as verification problems in a
straightforward manner. In this paper, we further study the
connections between these two formulations.

\mysubsection{DNN Verification.}
  Let 
$\nn:\rn{m}\rightarrow\rn{k} $ be a DNN and 
$ P:\rn{m+k}\rightarrow \lbrace {\top,\bot} \rbrace $ be a property, where $\top,\bot$ represent the values for which the property does and does not hold, respectively. The \emph{DNN verification problem} is to decide whether there exist
$ x\in\rn{m}$ and $y\in\rn{k}$ such that
$ ( \mathcal{N}(x) = y )\land P(x,y) $ holds. In particular, a verification query is always expressed as an existential formula.
 If such $x$ and $y$ exist,
we say that the verification query $\langle \mathcal{N},P \rangle$ is
\textit{satisfiable} (\sat); and otherwise, we say that it is
\textit{unsatisfiable} (\unsat).

A verification algorithm is \emph{sound} if it does not return
$\unsat{}$ for  satisfiable queries, and does not return $\sat$ for
unsatisfiable queries (in other words, if its answers are always
correct); and is \emph{complete} if it always terminates, for any query.

So far, there have been several efforts at studying the
complexity-theoretical aspects of DNN
verification~\cite{FeSh21,HeLeZi21,SaLa21,SaLa23,RuHuZiKw18,IvWeAlPaLe19,
  KaBaDiJuKo21}. Most previous work was focused on DNNs with
piecewise-linear activation function (specifically, $\relu$s), while
leaving many open questions about the complexity-theoretical aspects
of verifying DNNs with smooth and piecewise-smooth activation
functions.

\mysubsection{DNN Reachability.} Given a DNN
$ \mathcal{N}:\rn{m}\rightarrow\rn{} $, a function $o:\rn{}\rightarrow\rn{}$ and an input
set $\mathcal{X}\subseteq[0,1]^m$, the \emph{DNN reachability problem}
is to compute $\underset{x\in\mathcal{X}}{\sup}\ o(\mathcal{N}(x))$
and $\underset{x\in\mathcal{X}}{\inf}o(\mathcal{N}(x))$, perhaps up
to some $\epsilon$-error tolerance. For DNNs with
Lipschitz-continuous activation functions, either smooth or
piecewise-linear, (such as $\sigma$ and \relu{}), the reachability
problem with some $\epsilon$-error tolerance is
NP-complete, in the size of the network and $\epsilon$~\cite{RuHuZiKw18}. 
In this work, we consider a decision version for the problem where $o$ is the identity,
deciding whether $\underset{x\in\mathcal{X}}{sup}\ \mathcal{N}(x) \geq 0$ is
achieved for some $x\in\mathcal{X}$. This decision version with some $\epsilon$-error tolerance is then to decide whether $\underset{x\in\mathcal{X}}{sup}\ \mathcal{N}(x) \geq -\epsilon$ is achieved for some $x\in\mathcal{X}$.  In addition, we may assume that
the input of $\mathcal{N}$ is within $[0,1]^m$, as the input domain
may be normalized before the network is evaluated.

For example, consider the DNN depicted in Figure~\ref{fig:toyDnn}. A possible
\textit{verification} query for this DNN is given by a property
$P$ that returns $\top$ if and only if
$(x_1,x_2,x_3)\in [0,1]^3 \wedge (y\in [0.5, 0.75] \lor y\in [0,
0.25])$; i.e., if there exists an input in the $[0,1]^3$ cube, for
which $y\in [0.5, 0.75] \lor y\in [0, 0.25]$. A possible \textit{reachability}
query is to check whether there exists an input in
the domain $[0,1]\times[0,0.5]\times[0.5,1]$, for which $y \geq
0$. This reachability query can trivially be represented as a verification property $P'$,
which returns $\top$ if and only if
$(x_1,x_2,x_3)\in [0,1]\times[0,0.5]\times[0.5,1] \wedge y\geq 0$. For
any $\epsilon>0$, the equivalent reachability query with $\epsilon$
tolerance is to decide if there exists an input in the domain
$[0,1]\times[0,0.5]\times[0.5,1]$, for which $y \geq -\epsilon $.

\subsection{Decidability and Mathematical Logic}

\mysubsection{Mathematical Logic.}  In mathematical logic, a
\emph{signature} $\Sigma$ is a set of symbols, representing functions
and relations.  A \emph{$\Sigma$-formula} is a formula, comprised of
atoms and relations that appear in $\Sigma$, the usual logical
operators $(\wedge, \lnot, \lor, \rightarrow, \leftrightarrow)$, and
the quantifiers $\forall$ (universal) and $\exists$ (existential). A variable affixed with a
quantification symbol is a \emph{bounded variable}; and otherwise it
is a \emph{free variable}. A formula without free variables is called
a \emph{sentence}, and a formula without bounded variables is called a \emph{quantifier-free formula}. A formula with variables $\overline{x}=(x_1,...,x_n)$ of the form $\exists(\overline{x}).\varphi$ where $\varphi$ is quantifier-free is called an \emph{existential formula}. 
A \emph{$\Sigma$-theory} is a set of $\Sigma$-sentences. A
\emph{$\Sigma$-model} $\mathcal{M}$ is comprised of a set of elements,
denoted $|\mathcal{M}|$, and an \emph{interpretation} for all $\Sigma$
functions and relations; that is, a definition
$f^{\mathcal{M}}:|\mathcal{M}|^n\rightarrow|\mathcal{M}|$ for every
n-ary function $f\in\Sigma$, and a definition
$r^{\mathcal{M}}\subseteq |\mathcal{M}|^m$ or every m-ary relation
$r\in\Sigma$. If the interpretation of a $\Sigma$-sentence $\varphi$
is true within a model $\mathcal{M}$, we say that $\mathcal{M}$
\emph{satisfies} $\varphi$, and denote $\mathcal{M}\models\varphi$.
If $\mathcal{M}$ satisfies all sentences in a $\Sigma$-theory
$\mathcal{T}$, then $\mathcal{M}$ is a \emph{$\mathcal{T}$-model},
denoted $\mathcal{M}\models\mathcal{T}$.  Given some model
$\mathcal{M}$ over signature $\Sigma$, we define the theory
$Th(\mathcal{M})$ as the set of all $\Sigma$-sentences $\varphi$ such
that $\mathcal{M}\models \varphi$. It is then trivial that
$\mathcal{M}\models Th(\mathcal{M})$.

For example, let $\Sigma$ be the set $\lbrace +,-,\cdot,0,1,< \rbrace$ where $+,-,\cdot$ are 2-ary functions, $0,1$ are 0-ary functions and $<$ is a 2-ary relation. Let $\mathcal{M}$ be a model defined over $\mathbb{R}$ with addition, subtraction, multiplication, the constants 0,1 and the usual order. Then $Th(\mathcal{M})$ is the set of all $\Sigma$-sentences that $\mathcal{M}$ satisfies, such as $\forall x: x\cdot0=0$.

\mysubsection{Theory Decidability.}  For a $\Sigma$-theory
$\mathcal{T}$ and a $\Sigma$-sentence $\varphi$, we say that $\varphi$
is \emph{$\mathcal{T}$-valid} and denote $\mathcal{T} \vdash \varphi$,
if every model of $\mathcal{T}$ satisfies $\varphi$.  Furthermore, we
say that $\varphi$ is \emph{$\mathcal{T}$-satisfiable} if there exists
a model $\mathcal{M}$ of $\mathcal{T}$, for which
$\mathcal{M}\models\varphi$; and that $\varphi$ is
\emph{$\mathcal{T}$-unsatisfiable} if $\mathcal{M}\not\models\varphi$
for all models $\mathcal{M}$ of $\mathcal{T}$.  Satisfiability and
validity are closely connected, as $\varphi$ is $\mathcal{T}$-valid if
and only if $\lnot\varphi$ is $\mathcal{T}$-unsatisfiable.  A theory
$\mathcal{T}$ is \emph{decidable} if there exists an algorithm that,
for any sentence $\varphi$, decides whether
$\mathcal{T} \vdash \varphi$, within a finite number of steps. If
$\varphi$ is valid, the algorithm returns $\top$; and otherwise, it
returns $\bot$. Due to the connection of satisfiability and validity,
validity-checking algorithms may also be used to decide
satisfiability, and vice versa. In particular, for any theory
$Th(\mathcal{M})$ for some model $\mathcal{M}$, all
$Th(\mathcal{M})$-models satisfy exactly the same sentences, so
validity and satisfiability are equivalent. Thus, throughout this
paper we use decision procedures to decide the
$Th(\mathcal{M})$-satisfiability of formulas.  In addition, when
considering \emph{quantifier-free} formulas (i.e., a formula where all
variables are free), all of the formula's variables are implicitly
existentially quantified. In this case, for any quantifier-free
formula $\varphi(\overline{x})$ with variable vector $\overline{x}$,
the satisfiability problem of $\varphi(\overline{x})$ with respect to
a model $\mathcal{M}$ is equivalent to deciding whether
$\mathcal{M}\models\exists\overline{x}.\varphi(\overline{x})$. Similarly, the satisfiability problem of $\varphi(\overline{x})$ with respect to
a theory $\mathcal{T}$ is equivalent to deciding whether
$\exists\overline{x}.\varphi(\overline{x})$ is $\mathcal{T}$-satisfiable.

It has previously been shown that the theory of the real field $\tone$
is decidable~\cite{Ta48}, and that the theory of the real field with
the transcendental functions $\exp, \sin$ and the constants
$\log2, \pi$ is undecidable~\cite{Ri69}. The question of the decidability of
$\ttwo$ has remained an open problem since the
1950's~\cite{Ta52}, and it is commonly known as Tarksi's
\emph{exponential function problem}.

A theory $\mathcal{T}$ is \emph{stably-infinite} if for every quantifier-free formula $\varphi$, the satisfiability of $\varphi$ in $\mathcal{T}$ implies that $\varphi$ is satisfiable in some infinite model of $\mathcal{T}$. It is then immediate that for any infinite model $\mathcal{M}$, $Th(\mathcal{M})$ is stably-infinite. 
 
Given two decidable, stably-infinite theories
$\mathcal{T}_1$ and $\mathcal{T}_2$ defined over disjoint sets of symbols,
for any quantifier-free formulas
$F_1\in \mathcal{T}_1, F_2\in \mathcal{T}_2$, the formula
$F_1\wedge F_2$ is decidable as well~\cite{NeOp79}.  The
\emph{Nelson-Oppen method}~\cite{NeOp79} is a well-known method for combining two
decision procedures for two theories into a decision procedure for the
quantifier-free fragment of their union.

\mysubsection{Equisatisfiability.} Two formulas $\varphi$ and $\psi$ are \emph{equisatisfiable} if $\varphi$ is satisfiable if and only if $\psi$ is satisfiable.
For example, the formulas $\varphi := (a+b)*(a-b)=0$ and $\psi :=
c*d=0 \wedge c = a+b \wedge d = a-b$ are equisatisfiable. Note that
$\varphi$ and $\psi$ may be formulated in different theories,
$\mathcal{T}_1$ and $\mathcal{T}_2$, respectively. In this case, we say that the formulas are equisatisfiable if $\varphi$ is $\mathcal{T}_1$-satisfiable if and only if $\psi$ is $\mathcal{T}_2$-satisfiable.

\mysubsection{Function Definability.}
For any signature $\Sigma$ and an n-ary function $f$, not necessarily
in $\Sigma$, we say that $f$ is \emph{definable} in a $\Sigma$-model
$\mathcal{M}$ if there exists a $\Sigma$-formula
$\psi(x_1,...,x_n,y,z_1,...,z_m)$ over the variables $x_1,...,x_n,y$
such that for any elements $a_1,...,a_n,b$ in $\mathcal{M}$ we have that
$\mathcal{M}\models\exists z_1,...,z_m.\psi(a_1,...,a_n,b,z_1,...,z_m)$ if and only if $b = f(a_1,...,a_n)$. We say that $f$ is definable in a $\Sigma$-theory $\mathcal{T}$ if it is definable in all models of $\mathcal{T}$.

\mysubsection{Model-Completeness.} In model theory, the concept of model-completeness has several equivalent definitions. For our purposes, a theory $\mathcal{T}$ is model-complete if and only if any formula in the theory has an equivalent existential  formula (modulo $\mathcal{T}$). This means that the existential formulas in $\mathcal{T}$ can express all the formulas in it. $\ttwo$ is known to be model-complete~\cite{Wi96}.

\section{Decidability of DNN Verification}
\label{sec:dec}
In this section, we prove our first main result: the decidability of verifying
a DNN with the activation functions \relu{}, $\sigma$, $\tanh$ and $\tau$ is equivalent to the decidability of \linebreak $\ttwo$. The decidability of this theory is an open problem~\cite{Ta52}. Thus, the equivalence implies that the decidability of DNN
verification for DNNs with the activation functions \relu{}, $\sigma$, $\tanh$ and $\tau$ is an open problem as well.

For simplicity, we denote $\mathcal{T}_{\mathbb{R}} = \tone$,
$\mathcal{T}_{\exp}=\ttwo$, and $\mathcal{T}_{\sigma}=\tthree$, where
$\cdot_q$ is an unary function, interpreted as the multiplication with
a constant $q\in\mathbb{Q}$.  We use $\Sigma_{\sigma},\Sigma_{\exp}$
and $\Sigma_{\mathbb{R}}$ to denote the signatures of
$\mathcal{T}_{\sigma},\mathcal{T}_{\exp}$ and
$\mathcal{T}_{\mathbb{R}}$, respectively.  Note that for any DNN,
weights and biases are in $\mathbb{Q}$, and can thus be expressed as
$\mathcal{T}_{\mathbb{R}}$-terms. Therefore, we can express the affine
constraints of the network as
$\mathcal{T}_{\mathbb{R}}$-formulas. In addition, any constraint of the
form $f=\relu(b)$ can be expressed as the formula
$(f=b \leftrightarrow b > 0) \wedge (f=0 \leftrightarrow b \leq 0) $,
and thus \relu{} is definable in $\mathcal{T}_{\mathbb{R}}, \mathcal{T}_{\exp}$ and
$\mathcal{T}_{\sigma}$. Therefore, without loss of generality, we need
not add a function symbol to $\Sigma_{\sigma}$ to express DNNs with
\relu{} activation functions (or any other piecewise-linear function). Our
goal is then to show that the decidability of $\mathcal{T}_{\exp}$ is
equivalent to the decidability of all existential formulas of
$\mathcal{T}_{\sigma}$.

\mysubsection{Example:} We begin with an example that illustrates this
equivalence. For the first direction, consider the toy DNN depicted
in Figure~\ref{fig:toyDnn}, and let $x_1,x_2,x_3$,
$b_1,f_1,b_2,f_2,b_3,f_3$, and $y$ be the variables of the
network. Variables $x_1,x_2,x_3$ represent the input variables,
variables $b_1,f_1,b_2,f_2,b_3,f_3$ represent the inputs and outputs
of nodes $v_1,v_2,v_3$, respectively, and variable $y$ represents the network's output. Let $P$ be the property restricting the input to be within $[0,1]^3$ and the output to be in $[1,2]$. The verification query for the network in Figure~\ref{fig:toyDnn} and $P$ is then:

\begin{center}
	$\underset{i\in{1,2,3}}{\bigwedge}[(x_i \geq 0) \wedge (x_i \leq 1)] \wedge$ \\
	$(x_1 - x_2 = b_1) \wedge (x_2 - x_3 = b_2) \wedge$ \\
	$\underset{i\in{1,2}}{\bigwedge}[(f_i=b_i) \leftrightarrow (b_i > 0)] \wedge [(f_i=0) \leftrightarrow (b_i \leq 0)] \wedge$ \\
	$(f_1 - f_2 = b_3) \wedge (f_3 = \sigma(b_3)) \wedge (4\cdot f_3 = y) \wedge $  \\
	$(1\leq y) \wedge(y \leq 2)$\\
\end{center}
This is a $\mathcal{T}_{\sigma}$ query, which can be expressed as a
query in $\mathcal{T}_{\exp}$, since $\sigma(x) =
\frac{\exp(x)}{1+\exp(x)}$. The equivalent $\mathcal{T}_{\exp}$ query
is:

\begin{center}
	$\underset{i\in{1,2,3}}{\bigwedge}[(x_i \geq 0) \wedge (x_i \leq 1)] \wedge$ \\
	$(x_1 - x_2 = b_1) \wedge (x_2 - x_3 = b_2) \wedge$ \\
	$\underset{i\in{1,2}}{\bigwedge}[(f_i=b_i) \leftrightarrow (b_i > 0)] \wedge [(f_i=0) \leftrightarrow (b_i \leq 0)] \wedge$ \\
	$(f_1 - f_2 = b_3) \wedge [\definabletwo{b_3}{f_3}] \wedge (4\cdot f_3 = y)$ \\
	$(1\leq y) \wedge(y\leq2)$\\
\end{center}

For the second direction, we begin by demonstrating a purification
process of a given formula $\varphi := \exp(a+b)=\exp(a)\cdot\exp(b)$
into a formula in $\mathcal{T}_\sigma$. We assume that we can define
$\psi_{c=a\cdot b}$ and $\psi_{y=\exp(x)}$ in $\Sigma_\sigma$, which
are defined over the variables $a,b,c$ and $x,y$, respectively, and that witness the definability of the functions $\cdot$ and $\exp$ in $\mathcal{T}_{\sigma}$.
 Therefore, the formula
\begin{center}
		$\psi_{p=\exp(a)} \wedge \psi_{q=\exp(b)} \wedge \psi_{r=\exp(a+b)} \wedge
		\psi_{r=p\cdot q} $
\end{center}
is equisatisfiable to $ \exp(a+b)=\exp(a)\cdot\exp(b)$.

Formally, we prove the following theorem:
\begin{theorem}
	The decidability of verifying DNNs with $\sigma, \tanh, \tau$
        and $\relu{}$  activation functions is equivalent to the decidability of $\ttwo$.
     \label{thm:sigmoids}
\end{theorem}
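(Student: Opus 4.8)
The plan is to prove the equivalence by routing through an intermediate problem: satisfiability of existential $\mathcal{T}_{\sigma}$-sentences. Concretely, I establish the chain ``DNN verification query with activations among $\relu,\sigma,\tanh,\tau$'' $\;\leftrightarrow\;$ ``existential $\mathcal{T}_{\sigma}$-sentence'' $\;\leftrightarrow\;$ ``decidability of $\mathcal{T}_{\exp}$''. The first link is a direct structure-preserving encoding of networks and properties as formulas (and back). The second link has an easy direction --- every $\mathcal{T}_{\sigma}$-function is definable in $\mathcal{T}_{\exp}$ --- and a hard direction, which relies on the facts that $\cdot$ and $\exp$ are definable in $\mathcal{T}_{\sigma}$ and that $\mathcal{T}_{\exp}$ is model-complete (Wilkie), so that its decidability reduces to that of its existential fragment. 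I describe the two directions and then flag the main obstacle.

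\medskip\noindent\textbf{Direction 1: a decider for $\mathcal{T}_{\exp}$ yields a decider for DNN verification.} Given a query $\langle\nn,P\rangle$, I first write it as an existential $\mathcal{T}_{\sigma}$-sentence exactly as in the example preceding the theorem: one conjunct per neuron --- affine for the output neurons and pre-activations, and of the form $f=g(b)$ with $g\in\{\relu,\sigma,\tanh,\tau\}$ at hidden neurons --- conjoined with $P$; this is sound since weights and biases are rational, hence $\mathcal{T}_{\mathbb{R}}$-terms. I then rewrite each activation atom into $\Sigma_{\exp}$: $f=\relu(b)$ becomes $(f=b\leftrightarrow b>0)\wedge(f=0\leftrightarrow b\le 0)$; $f=\sigma(b)$ becomes $\definabletwo{b}{f}$; $f=\tanh(b)$ becomes $\definablethree{b}{f}$; and $f=\tau(b)$ becomes $(r=b\leftrightarrow b>0)\wedge(r=0\leftrightarrow b\le 0)\wedge\exp(f)=r+1$ with a fresh $r$, using that $\ln$ is the (definable) inverse of $\exp$ and that $r+1\ge 1$. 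Each rewrite is an equivalence, so the resulting existential $\mathcal{T}_{\exp}$-sentence is equisatisfiable with the query and can be handed to the assumed decision procedure for $\mathcal{T}_{\exp}$.

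\medskip\noindent\textbf{Direction 2: a decider for DNN verification yields a decider for $\mathcal{T}_{\exp}$.} By the model-completeness of $\mathcal{T}_{\exp}$ it suffices to decide existential $\mathcal{T}_{\exp}$-sentences $\exists\overline{x}.\varphi(\overline{x})$. I put $\varphi$ in negation normal form and purify it --- a fresh variable per compound subterm --- so it becomes a Boolean combination of linear atoms, multiplication atoms $c=a\cdot b$, and exponentiation atoms $y=\exp(x)$ over variables; a negated nonlinear atom such as $c\ne a\cdot b$ is first rewritten as $\exists c'.\,c'=a\cdot b\wedge c\ne c'$, so all nonlinear atoms occur positively under existentials. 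Assuming the key lemma that $\cdot$ and $\exp$ are definable in $\mathcal{T}_{\sigma}$, I replace each multiplication and exponentiation atom by its $\Sigma_{\sigma}$-definition and prenex the auxiliary existentials, obtaining an equisatisfiable existential $\mathcal{T}_{\sigma}$-sentence $\exists\overline{z}.\psi(\overline{z})$ whose nonlinear atoms are all $t=g(s)$ with $g\in\{\relu,\sigma,\tanh,\tau\}$ and $s,t$ variables. I then realize $\exists\overline{z}.\psi$ as a DNN verification query: make each variable of $\overline{z}$ an input neuron; for each atom $t=g(s)$ add a hidden neuron computing $g$ of the input for $s$ (a layer may mix activations and the output layer is affine, as the definition permits); let the affine output layer forward all hidden values; and let the property $P$ assert every linear atom of $\psi$ together with the equality of each atom's $t$-variable to the corresponding forwarded output. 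No value needs to be routed through intermediate layers since $P$ inspects the inputs directly (and in any case a value can be carried to the affine output layer via $z=\relu(z)-\relu(-z)$, so the degenerate case with no nonlinear atoms is handled too). The query is \sat{} iff $\exists\overline{z}.\psi$ holds, so the assumed DNN-verification decider decides $\mathcal{T}_{\exp}$.

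\medskip\noindent\textbf{Main obstacle.} Everything above is bookkeeping except the key lemma: $\cdot$ and $\exp$ are definable in $\mathcal{T}_{\sigma}$, even though $\Sigma_{\sigma}$ offers only the unary scalings $\cdot_{q}$. I would prove it by bootstrapping from $\tau$. Since $\tau(x)=\ln(\relu(x)+1)$, for $a\ge 1$ we have $\tau(a-1)=\ln a$; dually, for $s\ge 0$ the guarded formula $W\ge 1\wedge\tau(W-1)=s$ defines $W=e^{s}$. Composing, $a\mapsto e^{2\ln a}=a^{2}$ is $\mathcal{T}_{\sigma}$-definable on $[1,\infty)$ (the coefficient $2$ is $\cdot_{2}$); using $|x|=\relu(x)+\relu(-x)$ and the identity $a^{2}=(|a|+1)^{2}-2|a|-1$ --- in which $|a|+1\ge 1$ always --- this extends to squaring on all of $\mathbb{R}$; and then $a\cdot b=\tfrac14\bigl((a+b)^{2}-(a-b)^{2}\bigr)$ defines general multiplication (the coefficient $\tfrac14$ is $\cdot_{1/4}$). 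With $\cdot$ in hand, $\exp$ on $(-\infty,0)$ follows from $e^{v}\cdot e^{-v}=1$ and the already-defined $e^{-v}$. The remaining work is delicate but routine: picking branch guards that correctly pin down the non-injective $\relu$ and $\tau$, checking boundary arguments, and confirming that purification and every translation step preserve equisatisfiability across the change of theory. I expect this definability lemma --- in particular pushing squaring and multiplication past the regions where $\tau$ and $\relu$ are flat --- to be where essentially all of the difficulty lies; the two reductions and the DNN construction are then straightforward.
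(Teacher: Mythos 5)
Your Direction 1 is essentially the paper's argument, and your definability bootstrap is a legitimate (and arguably leaner) alternative to the paper's Lemma~\ref{lemma:psiln}: you obtain $\ln$ on $[1,\infty)$ and $\exp$ on $[0,\infty)$ from $\tau$ alone, extend squaring everywhere via $a^{2}=(|a|+1)^{2}-2|a|-1$, get multiplication by polarization, and recover $\exp$ on negative arguments by reciprocals --- so you never need $\sigma^{-1}$ or $\tanh^{-1}$, whereas the paper defines $\ln$ on $(0,1)$ via $\sigma^{-1}(x)-2\tanh^{-1}(x)+\tau(x)$ and builds $\cdot$ from $\ln(ab)=\ln a+\ln b$ with sign case-splits.

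The genuine gap is the very first step of your Direction 2: ``by the model-completeness of $\mathcal{T}_{\exp}$ it suffices to decide existential $\mathcal{T}_{\exp}$-sentences.'' Wilkie's model-completeness only guarantees that an equivalent existential formula \emph{exists} for every $\Sigma_{\exp}$-formula; it gives no algorithm to produce it, and since $\ttwo$ is not known to be recursively axiomatizable you cannot recover one by searching for a verified equivalence either. Hence a decision procedure for the existential fragment of $\mathcal{T}_{\exp}$ is not known to yield a decision procedure for the full theory via model-completeness alone, and your chain breaks exactly where the theorem's content lies (deciding \emph{all} of $\ttwo$). The paper circumvents this by switching to $\mathcal{T}_e=\tfour$ with the restricted exponential $e(x)=\exp(\frac{1}{1+x^{2}})$: Macintyre and Wilkie show both that decidability of $\mathcal{T}_e$ implies decidability of $\mathcal{T}_{\exp}$ and that in $\mathcal{T}_e$ an equivalent existential formula can be found \emph{effectively}, so it really does suffice to decide existential $\Sigma_e$-formulas. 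The repair costs you little: since you define both $\cdot$ and full $\exp$ in $\mathcal{T}_{\sigma}$, the function $e$ is definable too (compose your squaring, a reciprocal, and $\exp$, as the paper does with $\psi_{b=x\cdot x}$, $\psi_{1=a\cdot(b+1)}$, $\psi_{a=\ln(y)}$), and your purification-plus-DNN-realization argument then goes through verbatim with $\mathcal{T}_e$ in place of $\mathcal{T}_{\exp}$ --- but the appeal must be to the effective existential reduction for $\mathcal{T}_e$, not to model-completeness of $\mathcal{T}_{\exp}$.
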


\begin{proof}
  The first direction of the proof is similar to a technique proposed
  by Ivanov et al.~\cite{IvWeAlPaLe19}. Assume there exists a
  decision procedure for $\mathcal{T}_{\exp}$, and let
  $F\in\Sigma_{\sigma}$ be a DNN verification
  query. For any appearance of $\sigma(t)$ for some term
  $t$, we replace $t,\sigma(t)$ with the fresh variables $x,y$,
  respectively and add the conjunction: 
  \begin{center}
  $\definabletwo{x}{y} \wedge x = t$
  \end{center}
  to the resulting formula. This is done in a way similar to the one
  described in the example.
  Similarly, for any appearance of $\tanh(t)$ for some term
  $t$, we replace $t,\tanh(t)$ with the fresh variables $x,y$,
  respectively and add the conjunction: 
    \begin{center}
    	$\definablethree{x}{y} \wedge x = t$
   \end{center}
  to the resulting formula.
  For defining $\tau$, we first define: 
    \begin{center}
    	$\psi_{f=\relu(b)}:= (f=b \leftrightarrow b > 0) \wedge (f=0 \leftrightarrow b \leq 0) $
    \end{center}
  Now, for any appearance of $\tau(t)=\ln(\relu(t)+1)$ for some term
  $t$, we replace $t,\tau(t)$ with the fresh variables $x,y$, respectively and add the conjunction:
  \begin{center}
  	$ \psi_{z=\relu(x)} \wedge \exp(y) = z+1 \wedge x = t$
  \end{center}
  to the resulting formula, where $z$ is an additional fresh variable.
  After repeating this process iteratively,
  we convert any $F\in\Sigma_{\sigma}$ to an equisatisfiable
  formula $F'\in\Sigma_{\exp}$. We then use the decision
  procedure to decide the satisfiability of $F'$.

  The second direction of the proof is more complex. Assume we have a
  sound and complete verification procedure for DNNs with $\sigma,
  \tanh$ and $\tau$ activation functions; that is,
  a decision procedure for deciding the satisfiability of quantifier-free $\Sigma_\sigma$-formulas in $\mathcal{T}_{\sigma}$.

  Since $\mathcal{T}_{exp}$ is model-complete, it is tempting to try and
  construct a decision procedure for the existential formulas of
  $\mathcal{T}_{\exp}$. However, to the best of our knowledge, given a
  general formula in $\mathcal{T}_{\exp}$ it is not known how to
  effectively derive its equivalent existential formula. In order to circumvent
  this issue, we consider instead a fourth theory, $\mathcal{T}_e = \tfour$, defined over the signature $\Sigma_e$, where $e:\mathbb{R}\rightarrow\mathbb{R}$ with
  $e(x)=\exp(\frac{1}{1+x^2})$ is the \emph{restricted} exponential
  function.  It has been shown by Macintyre and Wilkie~\cite{MaWi96} that the decidability of this theory implies the decidability of $\mathcal{T}_{exp}$, and
  that given any formula in the language of $\mathcal{T}_e$, one can
  effectively find an equivalent existential formula (in
  $\mathcal{T}_{e}$). Therefore, it is enough for our
  purpose to consider any existential formula $\exists\overline{x}.\varphi\in\Sigma_e$, and decide the satisfiability of $\varphi$ in $\mathcal{T}_e$.

  Let $\exists\overline{x}.\varphi\in\Sigma_e$ be an existential
  formula, where $\varphi$ is a quantifier-free formula. We construct
  a $\Sigma_\sigma$-formula $\psi$, equisatisfiable to $\varphi$. In
  this construction, all variables are implicitly existentially quantified.
  In order to do so, it is enough to define formulas $\psi_{c=a\cdot
    b}$ and $\psi_{y=e(x)}$ over the variables $a,b,c$ and $x,y$
  respectively, and witness the definability of the functions $\cdot$
  and $e$ in $\mathcal{T}_{\sigma}$. In this case, given any formula
  $\varphi\in\Sigma_{e}$, we can iteratively replace any occurrence of
  terms of the form $t\cdot s$ with the fresh variable $p$ and add the
  conjunction $\psi_{p=t\cdot s}$, and occurrences of terms of the
  form $e(x)$ with the fresh variable $q$ and add the conjunction
  $\psi_{q=e(x)}$. This process terminates with a
  $\Sigma_{\sigma}$-formula $\psi$ equisatisfiable to $\varphi$,
  allowing us to apply the decision procedure to $\psi$.
  
  To complete the proof, it remains to show how $\psi_{c=a\cdot b}$ and
  $\psi_{y=e(x)}$ can be defined using the formula $\psi_{y=ln(x)}$.
  We show the construction of $\psi_{y=ln(x)}$ later, in Lemma~\ref{lemma:psiln}, and we use it here to define both $\psi_{c=a\cdot b}$ and $\psi_{y=e(x)}$.
  
  We start by defining $\psi_{c=a\cdot b}$.  Note that
  $\forall a,b > 0$, it holds that $\ln(a\cdot b)=\ln(a)+\ln(b)$; and
  so $a\cdot b = \exp(\ln(a)+\ln(b))$, assuming $a,b > 0$. This
  equality can be expressed using the formula:
  \begin{center}
   $\theta_{c=a\cdot b} := \psi_{p=\ln(a)} \wedge \psi_{q=\ln(b)} \wedge \psi_{p+q=\ln(c)}$,
  \end{center}
  where $c$ represents the value of $a\cdot b$, and $p,q$ are fresh variables.
  For defining $\psi_{c=a\cdot b}$ for all $a,b\in\mathbb{R}$, we split into cases, and write:
  \begin{center}
  	$\psi_{c=a\cdot b} := [(a>0\wedge b>0) \rightarrow \theta_{c=a\cdot b}] \wedge $
  	
  	$[(a<0\wedge b>0) \rightarrow \theta_{-c=-a\cdot b} ] \wedge$
  	
  	$[(a>0\wedge b<0) \rightarrow \theta_{-c=a\cdot -b}] \wedge $
  	
  	$[(a<0\wedge b<0) \rightarrow \theta_{c=-a\cdot-b}] \wedge $
  	
  	$[(a=0\lor b=0) \leftrightarrow c=0]$,
  	
  \end{center}
  which represents the function $\cdot$ and witnesses its definability in $\mathcal{T}_\sigma$.
  
   We now define $\psi_{y=e(x)}$. Recall that $e(x)=\exp(\frac{1}{x^2 +1 } )$, so in order to define $\psi_{y=e(x)}$ we use both $\psi_{c=a\cdot b}$ and $\psi_{y=\ln(x)}$:
   
   \begin{center}
   		$\psi_{y=e(x)} := \psi_{a=\ln(y)} \wedge$
   		$\psi_{1= a\cdot (b + 1)} \wedge $
   		$ \psi_{b=x\cdot x}$
   \end{center}
   where $a,b$ are fresh variables.

We have defined both $\psi_{c=a\cdot b}$ and $\psi_{y=e(x)}$, which concludes our proof.
\end{proof}

For the completeness of this section, we provide now the proof of Lemma~\ref{lemma:psiln}, which shows the construction of $\psi_{y=ln(x)}$:
\begin{lemma}
	\label{lemma:psiln}
	The natural logarithm function $\ln$ is definable in $\mathcal{T}_{\sigma}$.
\end{lemma}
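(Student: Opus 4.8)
The plan is to produce a single $\Sigma_{\sigma}$-formula $\psi_{y=\ln(x)}$, with one auxiliary (implicitly existentially quantified) variable $u$, that is satisfiable in $u$ exactly when $x>0$ and $y=\ln(x)$. The easy observation to start from is that $\tau$ already computes $\ln$ on $[1,\infty)$: since $\tau(t)=\ln(\relu{}(t)+1)$, we have $\tau(t)=\ln(1+t)$ for every $t\geq 0$, and in particular $\tau(1)=\ln 2$. Hence whenever $x>\tfrac12$ we have $2x-1>0$, so $\tau(2x-1)-\tau(1)=\ln(2x)-\ln 2=\ln(x)$, which settles all arguments above $\tfrac12$ using only affine operations and $\tau$.

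The interesting range is $x\in(0,\tfrac12]$, where $\tau$ is useless because its \relu{} clamps all nonpositive inputs to $0$. The natural idea is to pass to the reciprocal $1/x\geq 2$ and reuse the previous case, but this is precisely the main obstacle: $\Sigma_{\sigma}$ contains only multiplication by rational constants, so we cannot write an equation such as $u\cdot x = 1$ to pin down $u=1/x$. I would get around this by realizing the Möbius map $u\mapsto\frac{1}{1+u}$ as an explicit composition of available symbols instead of as an algebraic identity: a short computation shows $\sigma(-\tau(u-1))=\sigma(-\ln u)=\frac{1}{1+u}$ for all $u\geq 1$, and $u\mapsto\frac{1}{1+u}$ is a bijection of $[1,\infty)$ onto $(0,\tfrac12]$. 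So for $x\in(0,\tfrac12]$ the constraint $u\geq 1\wedge\sigma(-\tau(u-1))=x$ determines $u$ uniquely (as $\tfrac1x-1$), and for that $u$ we read off $\ln(x)=-\ln(1+u)=-\tau(u)$. The guard $u\geq 1$ cannot be dropped, since the \relu{}-clamping makes $\sigma(-\tau(u-1))$ constantly $\tfrac12$ for $u<1$ and would wreck uniqueness.

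Assembling the cases, I would take
\begin{align*}
\psi_{y=\ln(x)} \;:=\;& (x>0)\,\wedge\,\bigl[\,2x>1\rightarrow y=\tau(2x-1)-\tau(1)\,\bigr]\\
&{}\wedge\,\bigl[\,2x\leq 1\rightarrow\bigl(u\geq 1\wedge\sigma(-\tau(u-1))=x\wedge y=-\tau(u)\bigr)\bigr],
\end{align*}
and then check three cases: if $x\leq 0$ the first conjunct fails, so the formula is unsatisfiable, matching the fact that $\ln(x)$ has no value there; if $x>\tfrac12$ the second conjunct forces $y=\ln(x)$ and the third is vacuous; if $x\in(0,\tfrac12]$ the second is vacuous and the third forces $y=\ln(x)$ by the bijectivity argument. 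This $\psi$ uses only $\sigma$, $\tau$ and affine operations (no $\tanh$ is needed) and witnesses the definability of $\ln$ in $\mathcal{T}_{\sigma}$. The only points that need more than routine verification are the partiality of $\ln$ --- ensuring the formula is genuinely unsatisfiable, not vacuously valid, on $x\leq 0$ --- and the interplay between the existential guard $u\geq 1$ and the \relu{}-clamping inside $\tau$; the rest is direct computation.
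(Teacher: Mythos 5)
Your proposal is correct, and its core step differs from the paper's. Both proofs define $\ln$ piecewise and use $\tau$ to get $\ln$ directly on the large-argument range (the paper via $y=\tau(x-1)$ for $x>1$, you via $y=\tau(2x-1)-\tau(1)$ for $x>\tfrac12$, exploiting that $\tau(1)=\ln 2$ is a closed term of the signature); the difference is in the small-argument range. The paper handles $x\in(0,1)$ with the telescoping identity $\sigma^{-1}(x)-2\tanh^{-1}(x)+\tau(x)=\ln(x)$, encoded by the constraints $x=\sigma(a)\wedge x=\tanh(b)\wedge c=\tau(x)\wedge y=a-2b+c$, so it needs all three of $\sigma,\tanh,\tau$ but requires no uniqueness argument beyond injectivity of $\sigma$ and $\tanh$. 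You instead invert the map $u\mapsto\frac{1}{1+u}$ via the composition $\sigma(-\tau(u-1))=\frac{1}{1+u}$ on $u\geq 1$, pinning down the witness $u=\tfrac1x-1$ without any multiplication, and then read off $y=-\tau(u)$; this buys a formula using only $\sigma$, $\tau$ and affine operations (showing $\tanh$ is dispensable in Lemma~\ref{lemma:psiln}, though it does not address the paper's stated goal of eliminating $\tau$), at the modest cost of the guard $u\geq 1$ and the monotonicity/injectivity argument needed because the \relu{} inside $\tau$ makes $\sigma(-\tau(u-1))$ constant on $u<1$ --- a point you correctly identify and handle. Your treatment of the boundary cases ($x\leq 0$ unsatisfiable, vacuity of the inactive branch, existence of a witness $u$ in the active branch) matches what the paper's definition of definability requires, so the lemma's use in Theorem~\ref{thm:sigmoids} goes through unchanged with your formula.
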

\begin{proof}
	First, observe that for any $x\geq1$ we have that $\tau(x-1) = \ln(\relu{(x-1)}+1)=\ln(x-1+1)=\ln(x)$. 
	Second, observe that $\forall x\in(0,1)$, the inverses of $\sigma$ and $\tanh$ are defined and are equal to:
	\[\sigma^{-1}(x)=\ln(\frac{x}{1-x})= \ln(x)-\ln(1-x)\]
	and 
	\[\tanh^{-1}(x)=\frac{1}{2}\ln(\frac{1+x}{1-x})=\frac{1}{2}(\ln(1+x)-\ln(1-x))\]
	We conclude that:
	\[\forall x\in(0,1): \sigma^{-1}(x) - 2\tanh^{-1}(x) + \tau(x) =  \ln(x) - \ln(1-x) - \ln(1+x) + \ln(1-x) + \ln(x+1)  = \ln(x) \]
	We can express this relation using the following formula, and the fresh variables $a,b,c$:
	\begin{center}
		$ \theta_{x,y} := [x = \sigma(a)] \wedge [x = \tanh(b)] \wedge [c = \tau(x)] \wedge [y = a - 2b + c] $
	\end{center}
	Where $2b$ is syntactic sugar for $\cdot_2(b)$.
	Thus, we can define:
	\begin{center}
		$\psi_{y=\ln(x)} := [(1<x)\rightarrow(y=\tau(x-1))]
		\wedge [(x=1)\leftrightarrow(y=0)]  \wedge  [(0<x<1) \rightarrow  \theta_{x,y}]
		\wedge [0<x]$
              \end{center}
              which concludes the proof.
\end{proof}
\section{DNN Verification is DNN Reachability}
\label{sec:verisreach}

The two main formal analysis approaches for DNNs, verification and
reachability, are closely connected: a DNN reachability instance can
be formulated as DNN verification in a straightforward manner, as in the example in Section~\ref{subsec:DNNanalaysisBackground}. 
Presently, DNN analysis
algorithms and tools typically support one of the two
formulations. Here, we prove that DNN verification and DNN
reachability are in fact equivalent.

In this part, we consider DNNs that use both piecewise-linear and
Sigmoidal activation functions. We formally prove that any instance of
the DNN verification problem, with any specification expressible by a
quantifier-free linear arithmetic formula, can be reduced to an
instance of the DNN reachability problem. Since the reachability
problem is a specific case of verification, we ultimately prove that
for DNNs, reachability and verification are equivalent.  Since it was
shown that approximation of DNN reachability queries with
Lipschitz-continuous activation functions (such as $\sigma$ and
\relu{}) is NP-complete~\cite{RuHuZiKw18}, we deduce that the DNN
verification problem is reducible to a problem whose approximation is NP-complete. 
The reduction involves adding an additional input, denoted $\epsilon$,
and we use $(x,\epsilon)$ to denote the concatenation of $\epsilon$ to the input vector $x$.
Formally, we prove the following theorem:
\begin{theorem}
	Let $\nn: \rn{m} \rightarrow \rn{k}$ be a neural network,
	let $\varphi$ be
	a quantifier-free property with atoms expressing affine constraints over variables $y_i$ of $\nn$, and let $X\subseteq \rn{m}$. There exists a
	neural network $\nn':\rn{m+1} \rightarrow \mathbb{R}$, with
	$|\nn'|=O(|\nn|+|\varphi|)$ such that the two following conditions
	are equivalent:
	\begin{itemize}
		\item
		$\exists x \in X.\quad \nn(x) \models \varphi$
		\item 
		$\exists (x,\epsilon) \in X \times \left(0,1 \right].\quad
		\nn'(x,\epsilon) \geq 0$
	\end{itemize}
\end{theorem}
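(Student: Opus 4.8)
The plan is to construct $\nn'$ so that, on input $(x,\epsilon)$, it recomputes $y=\nn(x)$ and then evaluates a single real ``margin'' $V_\varphi(y,\epsilon)$ that is $\ge 0$ for some $\epsilon\in(0,1]$ exactly when $\nn(x)\models\varphi$; the extra input $\epsilon$ is a slack that turns the strict atoms of $\varphi$ into non-strict ones.

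First I would normalize $\varphi$: push negations down to the atoms by De~Morgan, rewrite $c^\top y = d$ as $c^\top y\le d \wedge d\le c^\top y$, rewrite $c^\top y\neq d$ as $c^\top y< d \vee d< c^\top y$, and flip $\ge,>$ to $\le,<$. This costs only a constant factor in size, so $\varphi$ becomes a monotone ($\wedge,\vee$-only) combination of $O(|\varphi|)$ atoms, each either \emph{closed} ($c^\top y\le d$) or \emph{open} ($c^\top y< d$). Then, for every subformula $\psi$, I would define $V_\psi(y,\epsilon)\in\mathbb{R}$ recursively: $V_\psi = d-c^\top y$ for a closed atom, $V_\psi = d-c^\top y-\epsilon$ for an open atom, $V_{\psi_1\wedge\psi_2}=\min(V_{\psi_1},V_{\psi_2})$, and $V_{\psi_1\vee\psi_2}=\max(V_{\psi_1},V_{\psi_2})$. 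The heart of the argument is the invariant, proved by induction on $\psi$: (i) $\epsilon\mapsto V_\psi(y,\epsilon)$ is non-increasing on $(0,1]$; (ii) if $V_\psi(y,\epsilon)\ge 0$ for some $\epsilon\in(0,1]$ then $y\models\psi$; and (iii) if $y\models\psi$ then $V_\psi(y,\epsilon)\ge 0$ for some $\epsilon\in(0,1]$. The atom cases use only the elementary equivalence $r>0 \Leftrightarrow \exists\epsilon\in(0,1].\ r-\epsilon\ge0$ (take $\epsilon=\min(r,1)$); the $\vee$ step is immediate; and the $\wedge$ step is the single delicate point --- given witnesses $\epsilon_1,\epsilon_2$ for the two conjuncts, $\epsilon^\ast=\min(\epsilon_1,\epsilon_2)\in(0,1]$ certifies both, but only because (i) lets the slack shrink without breaking either inequality. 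Instantiating $\psi:=\varphi$, $y:=\nn(x)$ gives $\nn(x)\models\varphi \Leftrightarrow \exists\epsilon\in(0,1].\ V_\varphi(\nn(x),\epsilon)\ge 0$.

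It then remains to realize $\nn'(x,\epsilon):=V_\varphi(\nn(x),\epsilon)$ as a DNN of the claimed size. The network first runs $\nn$ on the first $m$ coordinates to produce $y$, while forwarding $\epsilon$ through the layers alongside; this is routine since $\epsilon\in(0,1]$ is non-negative, so a chain of one \relu{} neuron per layer transports it unchanged, even past Sigmoidal layers (each neuron carries its own activation). A subsequent affine layer forms the atom margins $d-c^\top y$, subtracting $\epsilon$ on the open ones, and then a \relu{}-network mirroring the parse tree of $\varphi$ computes the nested $\min/\max$ via $\max(a,b)=b+\relu(a-b)$ and $\min(a,b)=a-\relu(a-b)$ --- one \relu{} gate per connective, with $O(1)$ auxiliary neurons per layer to route signed intermediate values through $t=\relu(t)-\relu(-t)$. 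A careful layered layout --- in particular collapsing a conjunction of closed atoms into one term $-\sum_j \relu(c_j^\top y - d_j)$, which is $\ge 0$ iff that conjunction holds --- keeps the total at $|\nn'|=O(|\nn|+|\varphi|)$, with a single affine output neuron as required. Composed with the margin invariant, this yields $\exists x\in X.\ \nn(x)\models\varphi \Leftrightarrow \exists (x,\epsilon)\in X\times(0,1].\ \nn'(x,\epsilon)\ge 0$; since, as noted earlier, reachability is in turn a particular verification query, the two problems are equivalent.

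I expect the main obstacle to be the conjunction case of the margin invariant: a single global $\epsilon$ must simultaneously act as slack for every strict atom that needs to hold, which is exactly why the margins are built to be monotone in $\epsilon$ and why $\epsilon$ is allowed to decrease. A secondary, more bookkeeping-heavy difficulty is the layered implementation --- routing $\epsilon$ and the partial margins to the gates that consume them while keeping the size linear in $|\nn|+|\varphi|$ --- for which pushing negations to the leaves beforehand and the flat-conjunction trick above are the key devices.
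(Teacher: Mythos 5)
Your proposal is correct, and it follows the same overall strategy as the paper (add a slack input $\epsilon\in(0,1]$, encode the truth of each subformula as a sign condition $\ge 0$ on a constructed neuron, and build the encoding by induction on the structure of $\varphi$ using only affine and \relu{} gadgets, so reachability $\Leftrightarrow$ verification). Where you differ is in the decomposition and the key lemma. The paper keeps $\varphi$ as atoms, conjunctions and negations, and lets $\epsilon$ enter at every \emph{negation}: the gadget for $\lnot\psi$ is $y_{\lnot\psi}=-\epsilon-y_\psi$, the gadget for $\wedge$ is $y_\varphi=-\relu(-y_\psi)-\relu(-y_\theta)$ (a sign-only analogue of your $\min$), and the fact that one global $\epsilon$ serves all negations is argued informally (``finitely many constructions, take the minimal $\epsilon$, capped at $1$''). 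You instead normalize to negation normal form, push $\epsilon$ down to the \emph{strict atoms} only, and prove a quantitative invariant --- margins $V_\psi$ computed by $\min/\max$, non-increasing in $\epsilon$ --- which makes the single-$\epsilon$ argument for conjunctions airtight and sidesteps any subtlety with nested negations; the price is the NNF preprocessing and explicit $\vee$/max gadgets, which the paper avoids. One caveat you share with the paper: in the strictly layered model (edges only between consecutive layers, as defined in the Background), consuming a value produced many layers earlier requires forwarding neurons, and for a deep alternating $\wedge/\vee$ parse tree your ``$O(1)$ auxiliary neurons per layer'' routing claim is optimistic --- naive forwarding can cost $\Theta(|\varphi|^2)$; the paper finesses the same point by implicitly using skip connections and only discussing their elimination for the $\epsilon$ line, so this is a shared piece of bookkeeping looseness rather than a gap specific to your argument (it can be repaired, e.g., by allowing skip connections or identity-activation forwarding with a more careful accounting).
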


\mysubsection{Example.}
We begin with an example for constructing $\nn'$, given some DNN
$\nn,$ and a property $\varphi$.
Consider first $\nn: \rn{4} \rightarrow \rn{2}$ as depicted in Figure~\ref{fig:ex1} and $\varphi := (y_1 > 0) \wedge ( y_1 \geq y_2 )$. We denote $\theta := y_1 \geq y_2$ and $\psi := y_1 > 0 \equiv \lnot (-y_1 \geq 0)$.
In Figure~\ref{fig:example}
we start with the initial DNN $\nn$, and then iteratively add new nodes to
$\nn$. In particular,  we show how to add neurons that are active if and only if $\nn \models \theta$ and $\nn \models \psi$, respectively in Figure~\ref{fig:ex2} and Figure~\ref{fig:ex3}. Lastly, in Figure~\ref{fig:ex4} we show how to add the output neuron, such that $\exists x \in X.\quad \nn(x) \models \varphi$ if and only if $\exists (x,\epsilon) \in X \times \left(0,1 \right].\quad \nn'(x,\epsilon) \geq 0$. This concludes our example.

\begin{figure*}[p]
	\begin{subfigure}[t]{0.5\linewidth}
		\def\layersep{1.2cm}

		\begin{center}
			\begin{tikzpicture}[shorten >=1pt,->,draw=black!50, node distance=\layersep,font=\footnotesize]
				
				\node[input neuron] (I-1) at (0,-1) {$x_1$};
				\node[input neuron] (I-2) at (0,-2) {$x_2$};
				\node[input neuron] (I-3) at (0,-3) {$x_3$};
				\node[input neuron] (I-4) at (0,-4) {$x_4$};
				
				\node[hidden neuron] (H-1) at (1.5*\layersep, -1.5) {$v_1$};
				\node[hidden neuron] (H-2) at (1.5*\layersep,-2.5) {$v_2$};
				\node[hidden neuron] (H-3) at (1.5*\layersep, -3.5) {$v_3$};
				
				\node[output neuron] (O-1) at (3*\layersep,-2) {$y_1$};
				\node[output neuron] (O-2) at (3*\layersep,-3) {$y_2$};
				
				\draw[nnedge] (I-1) --node[above,pos=0.4] {$1$} (H-1);
				\draw[nnedge] (I-2) --node[above,pos=0.4] {$-1$} (H-1);
				\draw[nnedge] (I-2) --node[above,pos=0.4] {$1$} (H-2);
				\draw[nnedge] (I-3) --node[above,pos=0.4] {$-1$} (H-2);
				\draw[nnedge] (I-3) --node[above,pos=0.4] {$1$} (H-3);
				\draw[nnedge] (I-4) --node[above,pos=0.4] {$-1$} (H-3);
				
				\draw[nnedge] (H-1) --node[above,pos=0.4] {$5$} (O-1);
				\draw[nnedge] (H-2) --node[above,pos=0.4] {$-3$} (O-1);
				
				\draw[nnedge] (H-2) --node[above,pos=0.4] {$3$} (O-2);
				\draw[nnedge] (H-3) --node[above,pos=0.4] {$-5$} (O-2);
				
				\node[above=0.01cm of H-1] (b1) {$\relu{}$};
				\node[above=0.01cm of H-2] (b1) {$\relu{}$};
				\node[above=0.01cm of H-3] (b1) {$\relu{}$};
			\end{tikzpicture}
		\end{center}
		\caption{The initial network.}
		\label{fig:ex1}
	\end{subfigure}
	\begin{subfigure}[t]{0.5\linewidth}
		\def\layersep{1.2cm}
		\begin{center}
			\begin{tikzpicture}[shorten >=1pt,->,draw=black!50, node distance=\layersep,font=\footnotesize]
				
				\node[input neuron] (I-1) at (0,-1) {$x_1$};
				\node[input neuron] (I-2) at (0,-2) {$x_2$};
				\node[input neuron] (I-3) at (0,-3) {$x_3$};
				\node[input neuron] (I-4) at (0,-4) {$x_4$};
				
				\node[hidden neuron] (H-1) at (1.5*\layersep, -1.5) {$v_1$};
				\node[hidden neuron] (H-2) at (1.5*\layersep,-2.5) {$v_2$};
				\node[hidden neuron] (H-3) at (1.5*\layersep, -3.5) {$v_3$};
				
				\node[output neuron] (O-1) at (3*\layersep,-2) {$y_1$};
				\node[output neuron] (O-2) at (3*\layersep,-3) {$y_2$};
				
				\node[constructed neuron new] (C-1) at (4.5*\layersep,-2.5) {$y_\theta$};
				
				\draw[nnedge] (I-1) --node[above,pos=0.4] {$1$} (H-1);
				\draw[nnedge] (I-2) --node[above,pos=0.4] {$-1$} (H-1);
				\draw[nnedge] (I-2) --node[above,pos=0.4] {$1$} (H-2);
				\draw[nnedge] (I-3) --node[above,pos=0.4] {$-1$} (H-2);
				\draw[nnedge] (I-3) --node[above,pos=0.4] {$1$} (H-3);
				\draw[nnedge] (I-4) --node[above,pos=0.4] {$-1$} (H-3);
				
				\draw[nnedge] (H-1) --node[above,pos=0.4] {$5$} (O-1);
				\draw[nnedge] (H-2) --node[above,pos=0.4] {$-3$} (O-1);
				
				\draw[nnedge] (H-2) --node[above,pos=0.4] {$3$} (O-2);
				\draw[nnedge] (H-3) --node[above,pos=0.4] {$-5$} (O-2);
				
				\draw[nnedge] (O-1) --node[above,pos=0.4] {$1$} (C-1);
				\draw[nnedge] (O-2) --node[above,pos=0.4] {$-1$} (C-1);

				\node[above=0.01cm of H-1] (b1) {$\relu{}$};
				\node[above=0.01cm of H-2] (b1) {$\relu{}$};
				\node[above=0.01cm of H-3] (b1) {$\relu{}$};
			\end{tikzpicture}
		\end{center}
		\caption{Adding a construct for $y_1 \geq y_2$.}
		\label{fig:ex2}
	\end{subfigure}
	\begin{subfigure}[t]{\linewidth}
		\vspace{0.5cm}
		\def\layersep{1.2cm}
		\begin{center}
			\begin{tikzpicture}[shorten >=1pt,->,draw=black!50, node distance=\layersep,font=\footnotesize]
				
				\node[input neuron] (I-1) at (0,-1) {$x_1$};
				\node[input neuron] (I-2) at (0,-2) {$x_2$};
				\node[input neuron] (I-3) at (0,-3) {$x_3$};
				\node[input neuron] (I-4) at (0,-4) {$x_4$};
				
				\node[hidden neuron] (H-1) at (1.5*\layersep, -1.5) {$v_1$};
				\node[hidden neuron] (H-2) at (1.5*\layersep,-2.5) {$v_2$};
				\node[hidden neuron] (H-3) at (1.5*\layersep, -3.5) {$v_3$};
				
				\node[output neuron] (O-1) at (3*\layersep,-2) {$y_1$};
				\node[output neuron] (O-2) at (3*\layersep,-3) {$y_2$};
				
				\node[constructed neuron] (C-1) at (4.5*\layersep,-2) {$y_\theta$};
				
				\node[constructed neuron new] (C-2) at (4.5*\layersep,-3) {$y_\psi$};
				\node[constructed neuron new] (C-3) at (0,-5) {$\epsilon$};
				
				\node[constructed neuron] (C-4) at (1.5*\layersep,-5) {};
				\node[constructed neuron] (C-5) at (3*\layersep,-5) {};
				
				\draw[nnedge] (I-1) --node[above,pos=0.4] {$1$} (H-1);
				\draw[nnedge] (I-2) --node[above,pos=0.4] {$-1$} (H-1);
				\draw[nnedge] (I-2) --node[above,pos=0.4] {$1$} (H-2);
				\draw[nnedge] (I-3) --node[above,pos=0.4] {$-1$} (H-2);
				\draw[nnedge] (I-3) --node[above,pos=0.4] {$1$} (H-3);
				\draw[nnedge] (I-4) --node[above,pos=0.4] {$-1$} (H-3);
				
				\draw[nnedge] (H-1) --node[above,pos=0.4] {$5$} (O-1);
				\draw[nnedge] (H-2) --node[above,pos=0.4] {$-3$} (O-1);
				
				\draw[nnedge] (H-2) --node[above,pos=0.4] {$3$} (O-2);
				\draw[nnedge] (H-3) --node[above,pos=0.4] {$-5$} (O-2);
				
				\draw[nnedge] (O-1) --node[above,pos=0.4] {$1$} (C-1);
				\draw[nnedge] (O-2) --node[above,pos=0.4] {$-1$} (C-1);
				
				\draw[nnedge] (C-3) --node[above,pos=0.4] {$1$} (C-4);
				\draw[nnedge] (C-4) --node[above,pos=0.4] {$1$} (C-5);
				\draw[nnedge] (C-5) --node[above,pos=0.4] {$1$} (C-2);
				\draw[nnedge] (O-2) --node[above,pos=0.4] {$-1$} (C-2);
				
				\node[above=0.01cm of H-1] (b1) {$\relu{}$};
				\node[above=0.01cm of H-2] (b1) {$\relu{}$};
				\node[above=0.01cm of H-3] (b1) {$\relu{}$};
				
			\end{tikzpicture}
			\vspace{0.2cm}
			\caption{Adding a construct for $y_2 > 0$.}
			\vspace{0.2cm}
			\label{fig:ex3}
		\end{center}
	\end{subfigure}
	\begin{subfigure}[t]{\linewidth}
		\vspace{0.5cm}
		\def\layersep{1.2cm}
		\begin{center}
			\begin{tikzpicture}[shorten >=1pt,->,draw=black!50, node distance=\layersep,font=\footnotesize]
				
				\node[input neuron] (I-1) at (0,-1) {$x_1$};
				\node[input neuron] (I-2) at (0,-2) {$x_2$};
				\node[input neuron] (I-3) at (0,-3) {$x_3$};
				\node[input neuron] (I-4) at (0,-4) {$x_4$};
				
				\node[hidden neuron] (H-1) at (1.5*\layersep, -1.5) {$v_1$};
				\node[hidden neuron] (H-2) at (1.5*\layersep,-2.5) {$v_2$};
				\node[hidden neuron] (H-3) at (1.5*\layersep, -3.5) {$v_3$};
				
				\node[output neuron] (O-1) at (3*\layersep,-2) {$y_1$};
				\node[output neuron] (O-2) at (3*\layersep,-3) {$y_2$};
				
				\node[constructed neuron] (C-1) at (4.5*\layersep,-2) {$y_\theta$};
				
				\node[constructed neuron] (C-2) at (4.5*\layersep,-3) {$y_\psi$};
				\node[constructed neuron] (C-3) at (0,-5) {$\epsilon$};
				
				\node[constructed neuron] (C-4) at (1.5*\layersep,-5) {};
				\node[constructed neuron] (C-5) at (3*\layersep,-5) {};
				
				\node[constructed neuron] (C-6) at (6*\layersep,-2) {};
				\node[constructed neuron] (C-7) at (6*\layersep,-3) {};
				\node[constructed neuron new] (C-8) at (7.5*\layersep,-2.5) {$y_\varphi$};

				\draw[nnedge] (I-1) --node[above,pos=0.4] {$1$} (H-1);
				\draw[nnedge] (I-2) --node[above,pos=0.4] {$-1$} (H-1);
				\draw[nnedge] (I-2) --node[above,pos=0.4] {$1$} (H-2);
				\draw[nnedge] (I-3) --node[above,pos=0.4] {$-1$} (H-2);
				\draw[nnedge] (I-3) --node[above,pos=0.4] {$1$} (H-3);
				\draw[nnedge] (I-4) --node[above,pos=0.4] {$-1$} (H-3);
				
				\draw[nnedge] (H-1) --node[above,pos=0.4] {$5$} (O-1);
				\draw[nnedge] (H-2) --node[above,pos=0.4] {$-3$} (O-1);
				
				\draw[nnedge] (H-2) --node[above,pos=0.4] {$3$} (O-2);
				\draw[nnedge] (H-3) --node[above,pos=0.4] {$-5$} (O-2);
				
				\draw[nnedge] (O-1) --node[above,pos=0.4] {$1$} (C-1);
				\draw[nnedge] (O-2) --node[above,pos=0.4] {$-1$} (C-1);
				
				\draw[nnedge] (C-3) --node[above,pos=0.4] {$1$} (C-4);
				\draw[nnedge] (C-4) --node[above,pos=0.4] {$1$} (C-5);
				\draw[nnedge] (C-5) --node[above,pos=0.4] {$1$} (C-2);
				\draw[nnedge] (O-2) --node[above,pos=0.4] {$-1$} (C-2);
				
				\draw[nnedge] (C-1) --node[above,pos=0.4] {$-1$} (C-6);
				\draw[nnedge] (C-2) --node[above,pos=0.4] {$-1$} (C-7);
				\draw[nnedge] (C-6) --node[above,pos=0.4] {$-1$} (C-8);
				\draw[nnedge] (C-7) --node[above,pos=0.4] {$-1$} (C-8);
				
				\node[above=0.01cm of H-1] (b1) {$\relu{}$};
				\node[above=0.01cm of H-2] (b1) {$\relu{}$};
				\node[above=0.01cm of H-3] (b1) {$\relu{}$};
				\node[above=0.01cm of C-6] (b1) {$\relu{}$};
				\node[above=0.01cm of C-7] (b1) {$\relu{}$};
				\node[above=0.01cm of C-8] (b1) {$\relu{}$};
			\end{tikzpicture}
			\vspace{0.2cm}
			\caption{Adding a construct for $\varphi$, as a conjunction.}
			\label{fig:ex4}
		\end{center}
	\end{subfigure}
	\vspace{0.5cm}
	\caption{Construction of a reachability problem for $\nn \models \varphi$. }
	\label{fig:example}
\end{figure*}
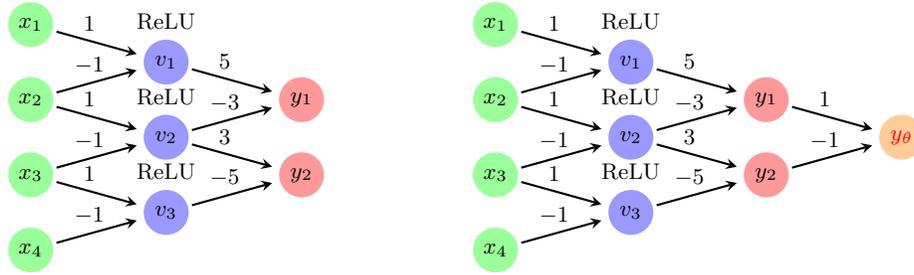
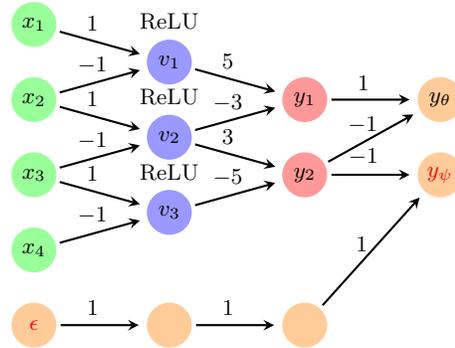
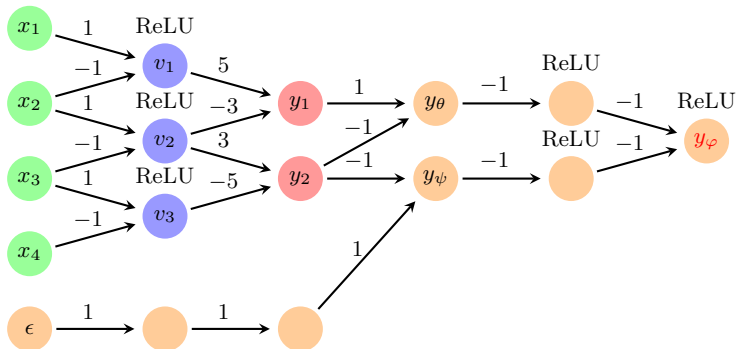

We now prove the theorem by induction on the generating sequence of $\varphi$;
that is, a sequence of sub-formulas of $\varphi$: $\varphi_1,...,\varphi_n$ such that
$\forall i,j$ if $j>i$ then $\varphi_j$ cannot be a sub-formula of
$\varphi_i$, and $\varphi_n = \varphi$. This allows inductive proofs
over the formulas~\cite{Sh67}. For example, a generating
sequence for the formula
\begin{center}
	$\varphi := \exists x,y.(3x \geq 7)\wedge\lnot(y \geq x)$
\end{center} 
is: 
\begin{center}
	$y\geq x, 3x \geq 7, \lnot(y \geq x),(3x \geq 7)\wedge\lnot(y \geq x),\exists x,y.(3x \geq 7)\wedge\lnot(y \geq x) $
\end{center}
\begin{proof}
  Without loss of generality, assume that $\varphi$ is composed of
  atoms, negations, and conjunctions. In addition, assume that each
  variable $y_j$ is an output variable (otherwise, we may add neurons
  with the identity as activation function from the neuron outputting
  $y_j$ to the output layer). For every step $i$ in the generating
  sequence $\varphi_1,...,\varphi_k = \varphi$, we add a constant
  number of output neurons, such that for any $x \in \rn{m}$,
  the resulting DNN $\nn_i'$, satisfies $\nn_i' \geq 0$ (for the last constructed output neuron) if and only if $\nn(x) \models \varphi_i$. Below we
  explain the construction and prove its
  correctness; and in Figure~\ref{fig:gadgets} we show its visual
  representation.

  \medskip
  \noindent
    \textbf{Base Cases:} 
\begin{enumerate}
\item Let $\varphi := \top $. In this case, $\nn'$ is constructed from
  $\nn$ by adding a single affine neuron with no activation function, and with its input edges with
  weight $0$ from all output nodes of $\nn$. This also maintains the convention that the
  output layer does not have an activation function. Therefore,
  $\forall x\in \rn{m}: \nn'(x,\epsilon) \geq 0$ if and only if
  $\underset{i}{\sum} 0 \geq 0 $, which is equivalent to
  $\top$. The case of $\bot$ is covered by our handling of negations.
  
\item Let $\varphi := \underset{i}{\sum} c_i \cdot y_i + b \geq 0 $.
  In this case, $\nn'$ is constructed from $\nn$ by adding a single
  affine neuron with no activation function, with its input edges with
  weight $c_i$ from every output neuron $y_i$ of $\nn$, and a bias
  $b$. Therefore, $\forall x \in \rn{m}$ and $\nn(x) = y $ we have
  that $\underset{i}{\sum} c_i \cdot y_i + b \geq 0 $ if and only if
  $\nn'(x,\epsilon) \geq 0$.  We note that equality can be handled using
  conjunctions, while strict inequalities can be handled using
  negations.
\end{enumerate}
\medskip
\textbf{Inductive step:} 
\begin{enumerate}
	\item Let $ \varphi := \psi \land \theta$, and let $y_\psi,
          y_\theta$ be the values of the neurons such that $y_\psi \geq 0, y_\theta \geq 0$ if and only if $\nn(x) \models \psi, \theta$, respectively. Consider:
\[
  y_\varphi = -\relu(-y_\psi) -\relu(-y_\theta)
\]
	In this case, we have that $y_\varphi \geq 0$ if and only if $y_\psi \geq 0 \land y_\theta \geq 0$. We can see this since if $y_\psi \geq 0 \land y_\theta \geq 0$ then both $-\relu(-y_\psi)$ and $-\relu(-y_\theta)$ equal zero. Otherwise, at least one of $-\relu(-y_\psi)$ and $-\relu(-y_\theta)$ is negative (and the other is non-positive). 
	Thus, we add two \relu{} neurons, with a single $-1$ input edge from each of the nodes corresponding to $y_\psi, y_\theta$, respectively. We then add a third neuron with two $-1$ edges from the \relu{} nodes and no activation function. 
	
      \item Let $\varphi := \lnot\psi$, and let $y_\psi$ be the value
        of the neuron such that $y_\psi \geq 0$ if and only if
        $\nn(x) \models \psi$. In this case, we first need to add a
        new $\epsilon_\varphi$ input neuron, and restrict it to
        $\epsilon_\varphi > 0$. Then, observe that
        $\lnot (y_\psi \geq 0) \equiv y_\psi < 0 $ if and only if
        there exists some $\epsilon > 0$ s.t.
        $ \epsilon + y_\psi \leq 0$, or equivalently
        $-\epsilon - y_\psi \geq 0$. Therefore, we add a new neuron
        with no activation function, with a skip connection from the
        $\epsilon_\varphi$ neuron with weight $-1$, and with a $-1$ weight from
        $y_\psi$, resulting in $y_\varphi = -\epsilon_\varphi - y_\psi$.
        We note that since there are finitely many
        such constructions, we can choose the minimal $\epsilon$
        implied by 
        all of them, and again choose the minimum of it and $1$. Thus,  a
        single $\epsilon \in \left( 0,1 \right]$ suffices.  In
        addition, the use of the skip connections can be replaced with
        a line of \relu{} neurons, starting with the $\epsilon$ neuron and
        feed-forwarding to a neuron on every layer. This construction
        does not affect the asymptotic size of $\nn'$.
\end{enumerate}

On every step of the recursion, we added a constant number of
neurons to the network, such that $\forall x \in \rn{m}$, $\nn'(x,\epsilon) \geq 0$ if and
only if $\nn(x) \models \varphi_i$. This concludes our proof.
\end{proof}

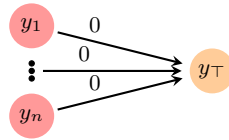
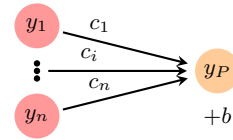
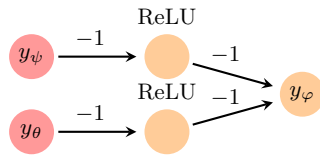
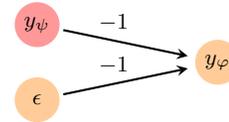
\begin{figure*}[h!]
	\begin{subfigure}[t]{0.45\linewidth}
		\def\layersep{1.2cm}
		\begin{center}
			\begin{tikzpicture}[shorten >=1pt,->,draw=black!50, node distance=\layersep,font=\footnotesize]
				
				\node[output neuron] (O-1) at (0,0) {$y_1$};
				\filldraw [black] (0,-0.5) circle (1pt);
				\filldraw [black] (0,-0.6) circle (1pt);
				\filldraw [black] (0,-0.7) circle (1pt);
				\node[output neuron] (O-2) at (0,-1.2) {$y_n$};
				
				\node[constructed neuron] (C-1) at (2*\layersep,-0.6) {$y_\top$};
				
				\draw[nnedge] (O-1) --node[above,pos=0.3] {$0$} (C-1);
				\draw[nnedge] (0.1,-0.6) --node[above,pos=0.3] {$0$} (C-1);
				\draw[nnedge] (O-2) --node[above,pos=0.3] {$0$} (C-1);

			\end{tikzpicture}
		\end{center}
		\caption{ An atom predicate of the form $\top$.}
	\end{subfigure}
	\begin{subfigure}[t]{0.55\linewidth}
		\def\layersep{1.2cm}
		\begin{center}
			\begin{tikzpicture}[shorten >=1pt,->,draw=black!50, node distance=\layersep,font=\footnotesize]
				
				\node[output neuron] (O-1) at (0,0) {$y_1$};
				\filldraw [black] (0,-0.5) circle (1pt);
				\filldraw [black] (0,-0.6) circle (1pt);
				\filldraw [black] (0,-0.7) circle (1pt);
				\node[output neuron] (O-2) at (0,-1.2) {$y_n$};
				
				\node[constructed neuron] (C-1) at (2*\layersep,-0.6) {$y_P$};
				
				\draw[nnedge] (O-1) --node[above,pos=0.3] {$c_1$} (C-1);
				\draw[nnedge] (0.1,-0.6) --node[above,pos=0.3] {$c_i$} (C-1);
				\draw[nnedge] (O-2) --node[above,pos=0.3] {$c_n$} (C-1);
				
				\node[below=0.05cm of C-1] (b1) {$+b$};
				
			\end{tikzpicture}
		\end{center}
		\caption{An atom predicate of the form $P:=\underset{i}{\sum} c_i \cdot y_i + b \geq 0$.}
	\end{subfigure}
	\begin{subfigure}[t]{0.45\linewidth}
		\def\layersep{1.2cm}
		\begin{center}
			\begin{tikzpicture}[shorten >=1pt,->,draw=black!50, node distance=\layersep,font=\footnotesize]
				
				\node[output neuron] (O-1) at (0,0) {$y_\psi$};
				\node[output neuron] (O-2) at (0,-1) {$y_\theta$};
				
				\node[constructed neuron] (C-1) at (1.5*\layersep,0) {};
				\node[constructed neuron] (C-2) at (1.5*\layersep,-1) {};
				\node[constructed neuron] (C-3) at (3*\layersep,-0.5) {$y_\varphi$};
				
				\draw[nnedge] (O-1) --node[above,pos=0.4] {$-1$} (C-1);
				\draw[nnedge] (O-2) --node[above,pos=0.4] {$-1$} (C-2);
				\draw[nnedge] (C-1) --node[above,pos=0.4] {$-1$} (C-3);
				\draw[nnedge] (C-2) --node[above,pos=0.4] {$-1$} (C-3);
				
				\node[above=0.01cm of C-1] (b1) {$\relu{}$};
				\node[above=0.01cm of C-2] (b1) {$\relu{}$};

			\end{tikzpicture}
		\end{center}
		\caption{ A formula of the form $\varphi =  \psi\land \theta $.}
	\end{subfigure}
	\begin{subfigure}[t]{0.55\linewidth}
		\def\layersep{1.2cm}
		\begin{center}
			\begin{tikzpicture}[shorten >=1pt,->,draw=black!50, node distance=\layersep,font=\footnotesize]
				
				\node[output neuron] (O-1) at (0,0) {$y_\psi$};
				
				\node[constructed neuron] (C-1) at (0,-1) {$\epsilon$};
				\node[constructed neuron] (C-2) at (2*\layersep,-0.5) {$y_\varphi$};
				
				\draw[nnedge] (O-1) --node[above,pos=0.4] {$-1$} (C-2);
				\draw[nnedge] (C-1) --node[above,pos=0.4] {$-1$} (C-2);
				
			\end{tikzpicture}
		\end{center}
		\caption{ A formula of the form $\varphi = \lnot \psi$. }
	\end{subfigure}
	\caption{Constructs for each step of the induction.}
	\label{fig:gadgets}
	\vspace{-0.2cm}
\end{figure*}

\section{Related Work}
\label{sec:Related}
The complexity of DNN verification has been studied mainly for DNNs
with piecewise-linear activation functions --- specifically, the 
\relu{} function. It has previously been shown that DNN verification is
NP-complete, even for simple
specifications~\cite{KaBaDiJuKo21,SaLa21}. However, when certain
restricted classes of DNN architectures and specifications are considered, DNNs with \relu{}s
only can be verified in polynomial time~\cite{FeSh21}.  The
verification complexity and computability in the case of reactive
systems controlled by \relu{} DNNs (i.e., the DNN acts as an agent
that repeatedly interacts with an environment) has also
been studied recently~\cite{AkKeLoPi19,AkLoMaPi18}. One
recent work showed that verifying CTL properties in this context is
undecidable~\cite{AkBoKoLo22}. In our work, however, we consider DNNs as stand-alone
functions.

\sloppy When considering realistic implementations of the \relu{}
function, e.g., in \emph{quantized neural
  networks}~\cite{HeLeZi21}, the DNN verification problem for
bit-vector specifications is PSPACE-hard, introducing a big complexity
gap from the case of ideal mathematical form. When specific models
of \emph{graph neural networks}~\cite{ZhCuHuZhYaLiWaLiSu20} are
considered, the verification problem is undecidable~\cite{SaLa23}.

For DNNs with Sigmoidal activation functions, two main
results are presently known. First, it was shown that reachability
analysis with some error tolerance $\epsilon$, for any
Lipschitz-continuous activation function, is NP-complete in the size
of the network and of $\epsilon$~\cite{RuHuZiKw18}. Second, it was
shown that the decidability of $\mathcal{T}_{\exp}$ implies the
decidability of verifying DNNs with Sigmoidal activation functions, and
that verifying such DNNs with a single hidden layer is
decidable~\cite{IvWeAlPaLe19}.
Our work here is another step towards a
better understanding of the complexity of verifying such DNNs.
The computational power of \emph{Recurrent} Neural Networks with Sigmoidal activation functions has been studied as well, with Turing completeness results for Sigmoidal RNNs~\cite{Ca19}. This can be further used to study the verification of Sigmoidal RNNs.

The complexity of formal analysis of DNNs with other functions, such as \emph{Gaussian} and \emph{arctan} has also been studied, showing the verification problem is at least as hard as deciding formulas in $\mathcal{T}_{\mathbb{R}}$~\cite{Wu23}.

The connections between DNN verification and DNN reachability have
also been studied before. Most prominently, it was shown that any
local-robustness verification query can be reduced to a DNN
reachability query~\cite{RuHuZiKw18}. In addition, a similar construction showing the equivalence between verification and reachability has been used before~\cite{ElGoKa20}, though for a specific example without a formal proof.

\section{Conclusion and Future Work}
\label{sec:Conclusion}
Our results show that for DNNs with \relu{}, $\sigma$, $\tanh$ and \text{NLReLU}
activation functions, the decidability of the verification problem is
equivalent to a well-known open problem; and that it can be reduced to a
problem whose approximation is decidable, and for whose complexity an
upper bound is known. This was achieved by reducing the verification problem to
the corresponding reachability problem. These results show a significant difference between
the verification problem for DNNs with piecewise-smooth activation functions and
for DNNs with piecewise-linear activation 
functions, which is known to be 
NP-complete~\cite{KaBaDiJuKo21,SaLa21}.

Moving forward, one goal that we plan to pursue is a version of our
first result that does not rely on the $\text{NLReLU}$ function, which
is not as mainstream as the other functions that we
considered. Although discarding this function does not alter the first
direction of the proof, the second reduction currently requires it;
and we plan to circumvent this requirement by defining a reduction
from a $\Sigma_{\exp}$-formula to a formula in the signature of
$\mathcal{T}_\mathbb{R} \cup
Th(\mathbb{R},+,-,{\cdot}_{q\in\mathbb{Q}},0,1,<,\sigma,\tanh)$, and
then using a combination of the decision procedures for these two
theories. It is noteworthy that the Nelson-Oppen method~\cite{NeOp79}
cannot be directly applied here, since it requires the combined theories to be
disjoint, which is not the case. Several generalizations of the
Nelson-Oppen method for non-disjoint theories have previously been
proposed~\cite{TiRi03,Gh04,NiRiRu09,Ri96}, and we speculate that these
could be useful in this context.

Another interesting direction that we plan to pursue is to combine our
work with approaches for switching between different kinds of machine
learning models. For example, it would be intriguing to study whether
DNNs with smooth activation functions can be reduced to decision
trees or to neural networks with a fixed number of layers, as can apparently be done for piecewise-linear
DNNs~\cite{Ay22,ViSc23}. Equivalently, fundamental differences between piecewise-linear
DNNs and smooth DNNs might imply similar differences between other
classes of machine learning models.

Our second result could also be generalized, in two different
manners. First, our construction could be applied to verification
queries that involve multiple DNNs, e.g., verification queries used
for proving DNN equivalence~\cite{NaKaRySaWa17}. This is true since
for two DNNs $\nn_1, \nn_2$ operating on the same domain, the
verification query $\nn_1(x) \overset{?}{=}\nn_2(x)$ can be reduced to
$\nn'(x) \overset{?}{=} 0$, where $\nn'$ is constructed from copies of
$\nn_1,\nn_2$ with outputs $y_1, y_2$, and where additional neurons
are used to stipulate that $y_3 \geq 0 \iff y_1 = y_2$, using our
construction. In this case, we have that $\nn' = O(|\nn_1|
+|\nn_2|)$. An illustration of this construction appears
in Figure~\ref{fig:equiisreach}.
These results, in turn, could be generalized to queries that involve any finite number of DNNs.

\begin{figure}[h]
	\def\layersep{1.2cm}
	\begin{center}
		\begin{tikzpicture}[shorten >=1pt,->,draw=black!50, node distance=\layersep,font=\footnotesize]
			
			\node[input neuron] (I-1) at (0,-1) {$x_1$};
			\node[input neuron] (I-2) at (0,-2) {$x_2$};
			\node[input neuron] (I-3) at (0,-3) {$x_3$};
			
			\node[hidden neuron] (H-1) at (1.5*\layersep, -1.5) {$v_1$};
			\node[hidden neuron] (H-2) at (1.5*\layersep,-2.5) {$v_2$};
			
			\node[output neuron] (O-1) at (3*\layersep,-2) {$y_1$};

			\draw[nnedge] (I-1) --node[above,pos=0.4] {$1$} (H-1);
			\draw[nnedge] (I-2) --node[above,pos=0.4] {$-1$} (H-1);
			\draw[nnedge] (I-2) --node[above,pos=0.4] {$1$} (H-2);
			\draw[nnedge] (I-3) --node[above,pos=0.4] {$-1$} (H-2);
			
			\draw[nnedge] (H-1) --node[above,pos=0.4] {$1$} (O-1);
			\draw[nnedge] (H-2) --node[above,pos=0.4] {$-1$} (O-1);

			\node[above=0.01cm of H-1] (b1) {$\relu{}$};
			\node[above=0.01cm of H-2] (b1) {$\relu{}$};

			\node[input neuron] (I2-1) at (0,-4) {$x_5$};
			\node[input neuron] (I2-2) at (0,-5) {$x_6$};
			\node[input neuron] (I2-3) at (0,-6) {$x_7$};
			\node[input neuron] (I2-4) at (0,-7) {$x_8$};
			
			\node[hidden neuron] (H2-1) at (1.5*\layersep, -4.5) {$u_1$};
			\node[hidden neuron] (H2-2) at (1.5*\layersep,-5.5) {$u_2$};
			\node[hidden neuron] (H2-3) at (1.5*\layersep, -6.5) {$u_3$};
			
			\node[output neuron] (O2-1) at (3*\layersep,-5.5) {$y_2$};

			\draw[nnedge] (I2-1) --node[above,pos=0.4] {$1$} (H2-1);
			\draw[nnedge] (I2-2) --node[above,pos=0.4] {$-5$} (H2-1);
			\draw[nnedge] (I2-2) --node[above,pos=0.4] {$1$} (H2-2);
			\draw[nnedge] (I2-3) --node[above,pos=0.4] {$-5$} (H2-2);
			\draw[nnedge] (I2-3) --node[above,pos=0.4] {$1$} (H2-3);
			\draw[nnedge] (I2-4) --node[above,pos=0.4] {$-5$} (H2-3);
			
			\draw[nnedge] (H2-1) --node[above,pos=0.4] {$4$} (O2-1);
			\draw[nnedge] (H2-2) --node[above,pos=0.4] {$4$} (O2-1);
			\draw[nnedge] (H2-3) --node[above,pos=0.4] {$4$} (O2-1);
			
			\node[above=0.01cm of H2-1] (b1) {$\relu{}$};
			\node[above=0.01cm of H2-2] (b1) {$\relu{}$};
			\node[above=0.01cm of H2-3] (b1) {$\relu{}$};
			
			\node[constructed neuron] (C-1) at (4.5*\layersep,-2.5) {$y_{y_1\leq y_2}$};
			\node[constructed neuron] (C-2) at (4.5*\layersep,-5.5) {$y_{y_1\geq y_2}$};
			\node[constructed neuron] (C-3) at (6*\layersep,-2.5) {};
			\node[constructed neuron] (C-4) at (6*\layersep,-5.5) {};
			\node[constructed neuron] (C-5) at (7.5*\layersep,-4) {$y_3$};
			
			\draw[nnedge] (O-1) --node[above,pos=0.4] {$-1$} (C-1);
			\draw[nnedge] (O2-1) --node[above,pos=0.4] {$1$} (C-1);
			
			\draw[nnedge] (O-1) --node[above,pos=0.4] {$1$} (C-2);
			\draw[nnedge] (O2-1) --node[below,pos=0.4] {$-1$} (C-2);
			
			\draw[nnedge] (C-2) --node[below,pos=0.4] {$-1$} (C-4);
			\draw[nnedge] (C-1) --node[above,pos=0.4] {$-1$} (C-3);
			
			\draw[nnedge] (C-3) --node[above,pos=0.5] {$-1$} (C-5);
			\draw[nnedge] (C-4) --node[below,pos=0.5] {$-1$} (C-5);
			
			\node[above=0.01cm of C-1] (b1) {$\relu{}$};
			\node[below=0.01cm of C-2] (b1) {$\relu{}$};
			\node[above=0.01cm of C-3] (b1) {$\relu{}$};
			\node[=0.01cm of C-4] (b1) {$\relu{}$};
			
		\end{tikzpicture}
		\caption{Reducing DNN equivalence to DNN reachability.}
		\label{fig:equiisreach}
	\end{center}
\end{figure}
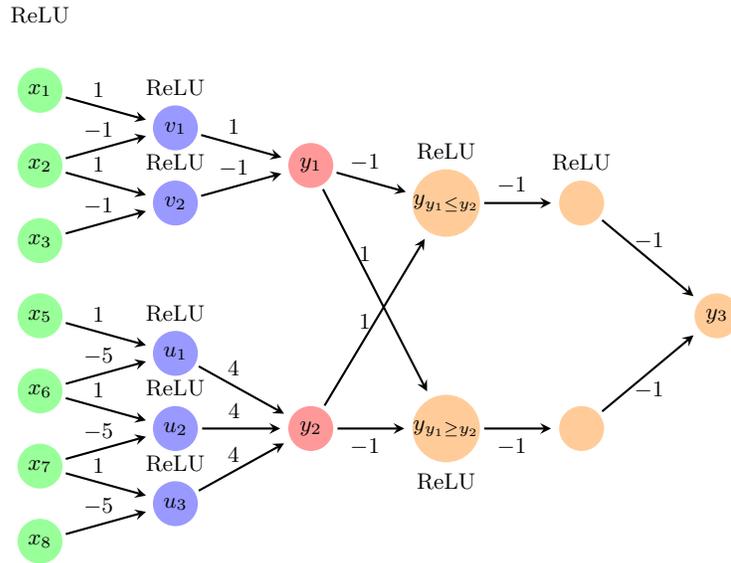

Second, similar constructions can support specification formulas over
arithmetics that include any activation function (even piecewise-smooth). In
this case, the number of added neurons is proportional not only to the
sizes of the original network and the formula, but also to the number
of activation functions composing the atoms. This construction is
straightforward, and is omitted.

Our second result provides a notion of estimation for the DNN verification problem in general. That is, we could relax any DNN verification query to its equivalent $\epsilon$-tolerant reachability query. This effectively allows
an error tolerance in the value of the output neuron of the resulting
network. However, the intuitive definition for approximating DNN verification
is to introduce error tolerance to the values of all neurons. To that end, we plan to investigate the connections between these two definitions of relaxation, and the advantages of using each one.

A final line of work that we intend to pursue in the future is to
consider a more realistic framework of verification, with a concrete
implementation of $\sigma$, rather than its pure mathematical
form. This is similar to what was done for DNNs with \relu{}
activation functions~\cite{HeLeZi21}. In addition, we intend to
characterize \emph{decidable fragments} of the DNN verification
problem, by restricting specifications and/or architectures; that is,
we plan to identify sufficient conditions on the DNNs and
specifications, which would render the resulting verification problem
decidable. For such decidable fragments, studying the computational
complexity of the verification problem is yet another intriguing line
of work. Similar research was conducted in the context of differential privacy~\cite{BaChKrSiVi21}, and it is interesting to study whether the decidable fragments identified in this research could be useful for DNN verification as well. We also intend to further explore implications of the \emph{Quasi-Decidability} of $\mathcal{T}_{exp}$~\cite{FrRaZg11} on DNN verification.

\mysubsection{Acknowledgments} This work was supported by the Israel Science 
	Foundation (grant number 619/21), the Binational Science Foundation (grant numbers
	2020250, 2021769, 
	2020704), 
	and by the National Science Foundation (grant numbers 1814369 
	and 2110397). 

\bibliographystyle{abbrv}
{\footnotesize
	\bibliography{decidability}}
\end{document}